\documentclass[pra,floatfix,amsmath,groupedaddress,twocolumn]{revtex4}
\usepackage{amssymb}
\usepackage{graphicx}
\usepackage{graphics}
\usepackage{amsmath}
\usepackage{mathtools}
\usepackage{amsthm}
\usepackage{amsfonts}
\usepackage{url}
\usepackage{booktabs}
\usepackage{braket}
\usepackage{bm}
\usepackage{bbold}
\usepackage{float}
\usepackage{enumerate}
\usepackage{color}
\usepackage{hyperref}
\usepackage{multirow}
\usepackage{array}

\newcommand{\bea}{\begin{eqnarray}}
\newcommand{\eea}{\end{eqnarray}}

\def\bi{\begin{itemize}}
\def\ei{\end{itemize}}
\def\bc{\begin{center}}
\def\ec{\end{center}}

\def\C{\hbox{$\mit I$\kern-.7em$\mit C$}}
\def\R{\hbox{$\mit I$\kern-.6em$\mit R$}}

\def\ket#1{|#1\rangle}

\def\tr{\mathrm{tr}}
\def\ket#1{\left| #1\right>}
\def\bra#1{\left< #1\right|}

\newcommand{\proj}[1]{\ket{#1}\bra{#1}}


\newtheorem{theorem}{Theorem}

\newtheorem{lemma}{Lemma}

\newtheorem{definition}{Definition}

\newtheorem{observation}{Observation}

\begin{document}

\author{K. Schwaiger}
\author{B. Kraus}
\affiliation{Institute for Theoretical Physics, University of Innsbruck, Innsbruck, Austria}
\title{Relations between bipartite entanglement measures }

\begin{abstract}
We investigate the properties and relations of two classes of operational bipartite and multipartite entanglement measures, the so-called source and the accessible entanglement. The former measures how easy it is to generate a given state via local operations and classical communication (LOCC) from some other state, whereas the latter measures the potentiality of a state to be convertible to other states via LOCC. Main emphasis is put on the bipartite pure states, single copy regime. We  investigate which parameter regime is physically available, i.e for which values of these measures does there exist a bipartite pure state. Moreover, we determine, given some state, which parameter regime can be accessed by it and from which parameter regime it can be accessed. We show that this regime can be determined analytically using the Postitivstellensatz. Moreover, we compute the boundaries of these sets and the boundaries of the corresponding source and accessible sets. Furthermore, we relate these results to other entanglement measures and compare their behaviors. Apart from that, an operational characterization of bipartite pure state entanglement is presented.
\end{abstract}
\maketitle

\section{Introduction}

Biparite entanglement, in particular pure state entanglement, is considered to be very well understood. This assessment steams mainly from the facts that (i) in the asymptotic regime the entanglement can be completely characterized via the entanglement of formation  \cite{Be} and (ii)  in the single copy case a complete set of entanglement measures for pure states is known \cite{vidal2}. To be more precise, the rate with which $n$ copies of a pure state, $\ket{\Psi} \in \mathbb{C}^d\otimes \mathbb{C}^d$ can be transformed asymptotically and reversibly into the maximally entangled state, $\ket{\Phi^+}=\sum_i \ket{ii}$ is given by the entanglement of formation \cite{Be}. In the single copy regime it is known that the entanglement monotones presented in \cite{vidal2} completely characterize the entanglement contained in the state. That is given the $d-1$ entanglement monotones a unique state (up to local unitary operations (LUs)) is characterized. However, in contrast to the entanglement of formation, the entanglement monotones do not have a clear physical meaning. Hence, despite the fact, that the characterization of bipartite pure state entanglement is mathematically well understood, it is lacking a clear operational meaning. 

However, such a clear physical picture as in the asymptotic case is also highly desirable for the single copy case. This is not only for the sake of understanding bipartite entanglement better from an operational point of view, but, probably more importantly, also to understand certain aspects of correlations in multipartite systems. A prominent example, where the relevance of bipartite entanglement within multipartite systems is shown, is the fundamental area law proven within condensed matter physics \cite{Hastings}. It gives a bound on the bipartite entanglement for ground states of local Hamiltonians (in 1D). This result has been used to prove an efficient description of the ground state of the system in terms of tensor network states, leading to efficient numerical simulations of such systems, which disclosed new insights about the underlying physics \cite{condensed}. 

Hence, a better understanding of bipartite entanglement from an operational point of view might also be relevant is these fields of research. It is precisely the aim of this paper to shine new light onto the physical characterization of bipartite entanglement.

Here, we first show that there is a very simple way to operationally characterize the entanglement contained in a bipartite pure $2^k \times 2^k$ system. It consists of considering such a state as a $2 k$ qubit state and determining the entanglement (measured by only one parameter, e.g. the geometric measure of entanglement \cite{Shim95}) in all possible biparite splittings. We show that these measures determine the Schmidt coefficients of the state uniquely. As any bipartite state can be embedded in a  $2^k \times 2^k$ system, this interpretation can be applied to any bipartite pure system. Moreover, such a characterizaton is also best suited for the case, where, as explained above, a bipartite splitting of a multiqubit state is considered. After that we pursue a different approach, which is based on entanglement measures which have a very clear operational meaning. We recently introduced two classes of entanglement measures, which are applicable to any system size as well as to pure and mixed states, the source entanglement, $E_s$ and the accessible entanglement, $E_a$. We determined these measures in case of multiparite systems in \cite{Schw15} and in case of bipartite system in \cite{Sauer15}. In order to explain the physical meaning let us denote by $M_s (\ket{\Psi})$ the set of states which can be transformed into the state $\ket{\Psi}$ and by $M_a (\ket{\Psi})$ the set of states into which $\ket{\Psi}$ can be transformed to via LOCC. Using then an arbitrary measure the source entanglement, $E_s(\ket{\Psi})$, measures the volume of the set $M_s (\ket{\Psi})$. That is, it measures how many states can be transformed into $\ket{\Psi}$. The more states are in $M_s (\ket{\Psi})$ the less entangled the state is, as all states in $M_s (\ket{\Psi})$ are at least as entangled as $\ket{\Psi}$. The accessible entanglement is then given by the volume of $M_a (\ket{\Psi})$.  In \cite{Sauer15} we derived closed expressions for the source entanglement and showed how the accessible entanglement can be evaluated. \par

It is the aim of this paper to better understand which values these measures can take and how they are related to each other and to other entanglement measures, such as the entanglement of formation. We will present several numerical results concerning the allowed region these measures can take and will then show how these regions can be determined analytically. To this end we use the Positivstellensatz, which characterizes those sets of polynomial equations and inequalities which have a real solution. Moreover, we will explain how the mapping of this problem to a SDP presented in \cite{Parr03} can be achieved in our case. The analysis performed here might be also used in order to obtain a bound on e.g. the value of the source entanglement given the geometric measure of entanglement. The results presented here show that whenever the considered functions are complete, in the sense that they uniquely characterize a state of interest, then the boundaries of these regions can be easily determined. However, in contrast to previous investigations, where different entanglement measures have been considered \cite{BeSa03}, the boundaries are more involved otherwise. Moreover, we investigate the entanglement contained in the states which belong to the source and the accessible set of some given state. There, again, the boundaries of these sets are easily characterized in case the measures are unique (as we will see in the case of $3\times 3$ and $4\times 4 $ systems). Finally we will also consider probabilistic transformations in this context.

The outline of the remainder paper is the following. We first recall the definition of the classes of entanglement measures we consider. 
Moreover, we recall an important theorem in real analysis, the Positivstellensatz, which gives necessary and sufficient conditions for the existence of a real solution to a set of polynomial equations and inequalities in $n$ variables. We also explain how a relaxation of the problem can be solved efficiently for a fixed degree using semidefinite programms (SDP). Furthermore, we present an operational characterization of bipartite pure state entanglement. Next, we analyze the mathematical properties of the entanglement measures and show e.g. that despite the fact that these are entanglement measures, they are not entanglement monotones. The only other example of such measures are, up to the knowledge of the authors, the Renyi entropies for $\alpha >1$, $\alpha \neq \infty$. After that, we focus in Sec. \ref{Sec:RelEsEa} and Sec. \ref{Sec:RelEsEaotherE} on the possible values of $E_s$ and $E_a$ which can be reached from a physical state. We compare these results to previous investigations \cite{BeSa03}, where mainly entropic functions are considered as measures and show the differences to the ones investigated here. The values of the entanglement measures which can be taken by states in the source and accessible set of a given state show a very interesting behavior. We then study how the entanglement can be transformed on the course of a LOCC protocol. After the presentation of these numerical investigation we show how these sets can be obtained analytically using the Positivstellensatz in Sec. \ref{physregionEsEa}. Moreover, we show how this theorem can be used to prove that a set of entanglement measures is complete. 
Finally, we also consider probabilistic transformations and study the entanglement in this setting.

\section{Notations and Preliminaries}
\label{Sec:NotPre}
In this section we first introduce our notations and basic properties of bipartite states and restate the source and accessible entanglement of bipartite states introduced in \cite{Sauer15}. Then we review a theorem from real algebra, namely the \textit{Positivstellensatz}, that gives a necessary and sufficient condition for the existence of a solution of a set of polynomial equations and inequalities and discuss its relevance in the context of this work. \par
Every pure state of a bipartite quantum system with Hilbert space $\mathcal{H} = \mathbb{C}^{d_1} \otimes \mathbb{C}^{d_2}$
can be (up to LUs) written as $\ket{\psi} = \sum_{i=1}^d \sqrt{\lambda_i} \ket{ii}$, i.e. $\ket{\psi} \simeq_{LU} \sum_{i=1}^d \sqrt{\lambda_i} \ket{ii}$,
where $d = \min\{d_1,d_2\}$ and $\lambda_i \geq 0$ denote the Schmidt coefficients with $\sum_i \lambda_i = 1$. We denote by
$\lambda(\psi) = (\lambda_1, \ldots, \lambda_d) \in \R^d$
the Schmidt vector of $\ket{\psi}$ and will consider in the following w.l.o.g. two $d$-level systems, which we call from now on $d \times d$ states. Note that we will often refer to a state via its Schmidt vectors. As can be easily seen from the Schmidt decomposition given above, two $d \times d$ states, $\ket{\psi},\ket{\phi},$ are LU equivalent if and only if (iff)
$\lambda^\downarrow (\psi)=\lambda^\downarrow (\phi)$, where here and in the following $\lambda^\downarrow (\psi)\in \R^d$, with
 $\lambda^\downarrow(\psi)_{i} \geq \lambda^\downarrow(\psi)_{i+1} \geq 0$ denotes the sorted Schmidt vector of $\ket{\psi}$. Note that when considering LOCC transformations of bipartite states, we exclude LU-transformations, as they do not alter the entanglement of the states. That is we pick one representative of each LU-equivalence class, i.e. $\lambda^\downarrow (\psi)$, and consider transformations among these representatives. In the context of
 LOCC transformations of pure bipartite states, the following functions of $x = (x_1,\ldots,x_d) \in \R^d$,
 \begin{align}
  E_k(x) := \sum_{i=k}^d x_i, \ k \in \{1,\ldots,d\},
 \end{align}
 play an important role. It was shown in \cite{nielsen} that a state $\ket{\psi} \in \mathcal{H}$ can be transformed into $\ket{\phi} \in \mathcal{H}$ deterministically via LOCC iff $\lambda^\downarrow(\psi)$ is majorized by $\lambda^\downarrow(\phi)$, written $\lambda^\downarrow(\psi) \prec \lambda^\downarrow(\phi)$, i. e.
 \begin{align} \label{Ineq_Maj} E_k(\lambda^\downarrow(\psi)) \geq E_k(\lambda^\downarrow(\phi)) \ \forall k \in \{1,\ldots,d\}, \end{align}
 with equality for $k=1$.
A direct consequence of this criterion is that the source and accessible set, i.e. the set of states that can either reach or be reached by a state $\ket{\psi}$ deterministically via LOCC, of $\ket{\psi} \in \mathcal{H}$ are given by
\begin{align}
 M_s(\psi) = \{\ket{\phi}\in {\cal H} \ \mbox{s.t.} \ \lambda (\phi) \prec \lambda (\psi)\}, \nonumber\\
 M_a(\psi) = \{\ket{\phi}\in {\cal H} \ \mbox{s.t.} \ \lambda (\psi) \prec \lambda (\phi)\}. \label{MaMs}
\end{align}
Let us now review the idea of measuring the volume of the source and accessible set, i.e. $M_s (\psi)$ and $M_a (\psi)$, and thus, obtaining a valid operational entanglement measure, as mentioned in the introduction. Let $\mu$ denote an arbitrary measure in the set of LU-equivalence classes. As mentioned above we consider one representative of each LU-equivalence class in $M_a$ and $M_s$, as we do not consider LU-transformations.  Then, the source volume is defined by $V_s(\psi)=\mu[M_s(\psi)]$ and the accessible volume by $V_a(\psi)=\mu[M_a(\psi)]$. Hence, the accessible and source entanglement are given by 
\begin{equation}\label{eas}
E_{a}(\psi)=\frac{V_a(\psi)}{V_a^{sup}},\quad E_{s}(\psi)=1-\frac{V_s(\psi)}{V_s^{sup}},
\end{equation}
where $V_a^{sup}$ ($V_s^{sup}$) denote the supremum of the accessible (source) volume according to the measure $\mu$. Note that for any valid measure $\mu$, $E_s$ and $E_a$ are valid entanglement measures, i.e. they are not increasing under LOCC. This can be easily verified considering the operational meaning of $E_s$, $E_a$. Whereas $E_s$ measures how easy it is to obtain a certain state via LOCC, $E_a$ measures how useful a state is, as this state is at least as powerful as any state in its accessible set. For bipartite pure states we presented in \cite{Sauer15} the following closed expression for the source entanglement \footnote{In the derivation of the source entanglement and its generalizations for bipartite pure states we picked a certain measure $\mu$ in the set of LU-equivalence classes to obtain the closed expression given in Eq.\ \eqref{Essimp}). As the source (and accessible) set fulfills the necessary LOCC-monotonicity condition we could use any measure $\mu$ to compute the volume of this set and obtain a valid entanglement measure.}
\begin{equation}
E_s(\psi) = 1- \sum_{\sigma \in \Sigma_d} \frac{(\sum_{k=1}^d \sigma_k \lambda_k - \frac{d+1}{2})^{d-1}}{\prod_{k=1}^{d-1} (\sigma_k - \sigma_{k+1})},
\label{sourceent}
\end{equation}
where $\Sigma_d$ denotes the symmetric group and the sum runs over all elements of this group. Note that for the accessible entanglement we have closed expressions for low-dimensional bipartite systems, e.g. for $3 \times 3$ states $E_a(\psi) =   \begin{cases} 12 \lambda_2 \lambda_3 \  \textrm{if } \lambda_1 > \frac{1}{2} \\
12 [\lambda_2 \lambda_3 - 1/4 (1-2\lambda_1)^2]  \ \textrm{if } \lambda_1\leq\frac{1}{2}
  \end{cases}$ (see Appendix \ref{sec:AppendixA}). Furthermore, we provided algorithms to compute the accessible entanglement numerically in \cite{Sauer15} for higher dimensional states.  For $3 \times 3$ states the source and the accessible set in terms of the Schmidt coefficients is shown in Fig.\ \ref{3x3volumes}. We reconsider the boundaries of these sets in Fig.\ \ref{3x3volumes} in the following sections. Note that similar figures were already introduced in \cite{ZyBe02}, where the authors investigated geometric properties of bipartite entanglement in terms of the Schmidt coefficients.
\begin{figure}[h!]
\centering
\includegraphics[width=0.2\textwidth]{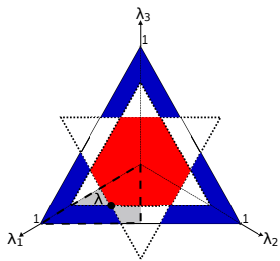}
\caption{\cite{Sauer15} The source and the accessible set of a $3 \times 3$ state with Schmidt vector $\lambda_{\phi} = (0.6, 0.37,0.13)$ in terms of the Schmidt coefficients. The thick, dashed line encloses the set of sorted Schmidt vectors that is in one-to-one correspondence to the LU-equivalence classes. The red (blue) regions depict the source (accessible) set of the quantum state respectively. Note that the number of vertices of the accessible set can change and is in this case equal to four (see \cite{Sauer15} for details). }
\label{3x3volumes}
\end{figure}
Note, further, that in \cite{Sauer15} we introduced also the generalization of the source entanglement, which leads to a whole class of operational entanglement measures. For these generalizations we measure the set of states, $\ket{\Psi} \in \mathbb{C}^k \otimes \mathbb{C}^k$, that can be converted via LOCC to a $d \times d$ state $\ket{\psi}$, with smaller or equal dimensions, i.e. $d \leq k$. Then we can identify the state $\ket{\psi}$ with a state $\ket{\Psi^k(\psi)} \in \mathbb{C}^k \otimes \mathbb{C}^k$, whose Schmidt vector is simply given by adding $k-d$ zeros to the initial, d-dimensional Schmidt vector of $\ket{\psi}$. Thus, we get a whole class of operational entanglement measures given by the generalizations of the source entanglement, that are for each $k$ given by (see \cite{Sauer15} for more details)
\begin{align}
E_s^{k\rightarrow d}(\psi) = \frac{1}{\sup\limits_{\ket{\phi} \in  \mathbb{C}^d \otimes \mathbb{C}^d} E_s(\Psi^k(\phi))} E_s(\Psi^k(\psi)) \quad ,k \geq d. \label{eq:SourceGeneralization}
\end{align}
Note that the dimension of the volumes corresponding to the generalizations of the source entanglement is higher than the dimension of the source volume corresponding to the source entanglement in Eq.\ \eqref{sourceent}. That is the dimension of the source sets corresponding to the generalized source entanglement $E_s^{k\rightarrow d}$ is equal to $k-1$. 
\par
Let us now review the Positivstellensatz \cite{Steng74}, which is a fundamental theorem in real algebraic geometry. It states that there either exists a polynomial identity, which certifies that a system of polynomial equations and inequalities has no solution in $\mathbb{R}^n$ or there exists indeed a solution to this system. A few definitions of algebraic objects, i.e. the ideal and the cone, are in order before we can recall the theorem. 
\begin{definition}
The subset $I \subseteq \mathbb{R}[x_1,...,x_n]$ is an \textup{ideal} if it satisfies
\begin{enumerate}[(a)]
\item 0 $\in I$,
\item If a, b $\in I$, then a+b $\in I$,
\item If a $\in I$ and b $\in \mathbb{R}[x_1,...,x_n]$ , then a$\cdot$ b $\in I$.
\end{enumerate}
\end{definition}
A simple example would be the ideal corresponding to the set of multivariate polynomials $\{h_1,...,h_m\}$, $h_i \in \mathbb{R}[x_1,...,x_n] \forall i$. It is given by
\begin{equation}
I(h_1,...,h_m) = \{ h \mid h= \sum_{i=1}^m t_i h_i, \ t_i \in \mathbb{R}[x_1,...,x_n]\}. 
\end{equation}
Another example of an ideal is the set of polynomials with a common set of roots. 
\begin{definition}
The subset $P \subseteq \mathbb{R}[x_1,...,x_n]$ is a \textup{cone} if it satisfies
\begin{enumerate}[(a)]
\item If a,b $\in P$, then a + b $\in P$,
\item If a, b $\in P$, then a $\cdot$ b $\in P$,
\item If a  $\in \mathbb{R}[x_1,...,x_n]$ , then a$^2$ $\in P$.
\end{enumerate}
\end{definition}
Hence, the cone of a set of multivariate polynomials $\{f_1,...,f_s\}$,  $f_i \in \mathbb{R}[x_1,...,x_n] \forall i$, reads
\begin{equation}
\begin{split}
P(f_1,...,f_s) = \{f \mid f = & s_0 + \sum_{i=1}^m s_i f_i + \sum_{i<j} s_{i j } f_i f_j + \\ &\sum_{i<j<k} s_{i j k} f_i f_j f_k + ... s_{1 2 ... s} f_1 f_2 ... f_s\},
\end{split}
\end{equation}
with $s_{\{i_1,...,i_s\}} \in \mathbb{R}[x_1,...,x_n]$ a sum of squares polynomial (see also Sec.\ \ref{Sec:EsEaPositivstellen}).
Furthermore, the \textit{multiplicative monoid} $M$ of a set of polynomials $\{ g_1,...,g_k \}$ is defined by
\begin{equation}
M(g_1,...,g_l) = \prod_{i=1}^l g_i^{a_i},  \ a_i \in \mathbb{N}.
\end{equation}
With this definitions we can now state the Positivstellensatz \cite{Steng74}.
\begin{theorem}\label{positivstellen}
Let $\{f_i\}_{i=1,...s}, \ \{h_j\}_{j=1,...,m}, \ \{g_k \}_{k=1,...,l}$ be finite families of polynomials in $\mathbb{R}[x_1,...,x_n]$. Then the set $\left. \begin{cases}  & f_i(x) \geq 0, \ i= 1,...,s \\ x \in \mathbb{R}^n \mid & g_k(x) \neq 0, \ k=1,...,l \\  & h_j(x) = 0, \ j=1,...,m \end{cases} \right\}$ is empty iff
\begin{align}
\exists f \in P(f_1,...,f_s), & g\in M(g_1,...,g_l), h \in I(h_1,...,h_m) \nonumber \\ & \text{s.t.} \ f + g^2 + h = 0.
\label{pos}
\end{align}
\end{theorem}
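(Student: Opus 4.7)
My plan is to prove the two implications of the biconditional separately. The easy direction (certificate $\Rightarrow$ empty) unfolds by substitution: assume $f+g^2+h=0$ with $f \in P(f_1,\ldots,f_s)$, $g \in M(g_1,\ldots,g_l)$, $h \in I(h_1,\ldots,h_m)$, and suppose for contradiction that some $x_0$ lies in the set. Unpacking the three definitions, $h(x_0)=0$ because $h$ is a polynomial combination of the vanishing $h_j$'s; $f(x_0)\geq 0$ because every term of $f$ is a sum of squares times a product of the nonnegative $f_i(x_0)$'s; and $g(x_0)^2>0$ because $g$ is a nonvanishing product of the $g_k(x_0)$'s. Then $f(x_0)+g(x_0)^2+h(x_0)>0$, contradicting the identity.

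For the harder direction (empty $\Rightarrow$ certificate) my plan is the standard abstract approach via extending cones to orderings. Let $R=\mathbb{R}[x_1,\ldots,x_n]$, write $\mathcal{J}:=I(h_1,\ldots,h_m)$ for the ideal, let $S\subseteq R$ be the multiplicative set generated by $g_1,\ldots,g_l$, and form the $\mathbb{R}$-algebra $A=(R/\mathcal{J})[S^{-1}]$, in which each $\bar h_j$ vanishes and each $\bar g_k$ is a unit. Inside $A$ let $T$ be the cone (in the sense of Definition 2, adapted to $A$) generated by $\bar f_1,\ldots,\bar f_s$: all finite sums of terms $\sigma\prod_{i\in J}\bar f_i$ with $J\subseteq\{1,\ldots,s\}$ and $\sigma$ a sum of squares in $A$.

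The key reduction is: the set in the theorem is empty if and only if $-1\in T$. If $-1\in T$, write $-1=\sum_J\sigma_J\prod_{i\in J}\bar f_i$ in $A$, clear denominators by multiplying through by an even power of $g:=g_1\cdots g_l$, and lift back to $R$ to obtain $g^{2N}+p+q=0$ with $p\in P(f_1,\ldots,f_s)$ and $q\in\mathcal{J}$; this is the desired certificate with $g^N\in M(g_1,\ldots,g_l)$. If instead $-1\notin T$, invoke Zorn's lemma to extend $T$ to a cone $T^\ast\subseteq A$ maximal among those not containing $-1$. Standard arguments then show that $T^\ast$ is an ordering on $A$, i.e.\ $T^\ast\cup(-T^\ast)=A$, and that $\mathfrak{p}:=T^\ast\cap(-T^\ast)$ is a prime ideal. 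Hence $T^\ast$ induces an ordering on the integral domain $A/\mathfrak{p}$ and therefore on its field of fractions $K$, which embeds in a real closed field $R^\ast$. In $R^\ast$ the images $\bar x_1,\ldots,\bar x_n$ satisfy $\bar f_i\geq 0$, $\bar g_k\neq 0$, $\bar h_j=0$, so the analogous system is consistent over $R^\ast$. By the Tarski--Seidenberg transfer principle (equivalently, the Artin--Lang homomorphism theorem) the same first-order sentence then has a solution over $\mathbb{R}^n$, contradicting the hypothesis. Hence $-1\in T$, as required.

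The main obstacle, unsurprisingly, is the reverse direction. The Zorn-style extension and the fact that a maximal proper cone is an ordering with prime support are classical but delicate. The real crux is the transfer step: moving from the existence of an abstract real-closed extension in which our system is consistent to an actual point of $\mathbb{R}^n$, which is precisely what Tarski--Seidenberg provides. A secondary bookkeeping point is that clearing denominators naturally produces an identity of the shape $g^{2N}+p+q=0$, which one must recognize as being of the form $f+g^2+h=0$ demanded by the statement, with the element of $M$ being $g^N$ rather than $g$ itself.
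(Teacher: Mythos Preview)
Your easy direction (certificate $\Rightarrow$ emptiness) is exactly what the paper does: it observes that for any putative solution $x_0$ one has $f(x_0)\geq 0$, $g(x_0)^2>0$, $h(x_0)=0$, contradicting $f+g^2+h=0$. For the converse, the paper does not give a proof at all; it simply cites Stengle's original article. So on the hard direction you are going well beyond what the paper supplies.

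Your sketch of the hard direction is the standard real-algebraic route and is correct in outline. Passing to $A=(R/\mathcal{J})[S^{-1}]$, generating the preordering $T$ from the $\bar f_i$, and reducing the whole statement to ``$-1\in T$'' is exactly the right reformulation; the denominator-clearing step that produces $g^{2N}+p+q=0$ with $p\in P$, $q\in\mathcal{J}$, and hence a certificate with monoid element $g^N$, is the correct bookkeeping. The Zorn extension of $T$ to a maximal proper cone $T^\ast$, the verification that $T^\ast\cup(-T^\ast)=A$ and that $T^\ast\cap(-T^\ast)$ is a prime ideal, and the passage to an ordered field and then a real closed extension are indeed ``classical but delicate''; they are the heart of Stengle's argument (or, equivalently, of the Krivine--Prestel--Stengle development). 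Your identification of Tarski--Seidenberg / Artin--Lang as the transfer mechanism back to $\mathbb{R}$ is correct and is precisely the nontrivial external input. One minor edge case worth mentioning explicitly: if the localization $A$ is the zero ring (i.e.\ some product of the $g_k$ already lies in $\mathcal{J}$), then $-1\in T$ trivially, and the certificate $0+g^2+(-g^2)=0$ with $g\in\mathcal{J}$ is immediate since $g^2\in\mathcal{J}$ as well. Otherwise your argument stands as a faithful summary of the standard proof.
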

Note that it is easy to see that if Eq.\ \eqref{pos} is fulfilled, there cannot exist a solution to the set of polynomial equations and inequalities, as for any solution $x_0 \in \mathbb{R}^n$ we have that $f(x_0) \geq 0$, $g^2(x_0)> 0$ and $h(x_0) = 0$. Thus, $f(x_0) + g^2(x_0) + h(x_0) >0$, which is in contradiction with Eq.\ \eqref{pos}.
For a proof of Theorem\ \ref{positivstellen} see \cite{Steng74}. The Positivstellensatz thus results in a single equation that corresponds to a necessary and sufficient condition for the existence of a real solution to a polynomial system of equations and inequalities. It is a very powerful theorem that leads in many cases to quite simple infeasibility certificates. In Sec.\ \ref{Sec:EsEaPositivstellen} we use the Positivstellensatz on the one hand to get certificates for when two given values of e.g. the source and the accessible entanglement cannot correspond to a physical state and on the other hand to show that the source entanglement together with its generalizations characterizes few-qubit bipartite entanglement. Whereas in the first case no functions $g_k$ are required, we will use them in the second. Note that by restricting the overall degree of the left-hand side of Eq.\ \eqref{pos} one can find solutions of this equation efficiently, as this problem can then be stated as a semidefinite program as shown in \cite{Parr03} (see Sec.\ \ref{Sec:EsEaPositivstellen}).

\section{Properties of bipartite entanglement}
\label{Sec:Prop}
In this section we present an operational characterization of bipartite pure state entanglement. Moreover, we discuss general properties of the source and accessible entanglement.
\subsection{Operational characterization of bipartite entanglement via the geometric measure of entanglement}
\label{Sec:OpChar}
In this section we show that the bipartite entanglement of a $d \times d$ system can be characterized as follows. Let $n = \lceil \log(d) \rceil$ and consider the bipartite state $\ket{\psi}_{A B}$ as a $2 n$-qubit state, i.e. $\ket{\psi}_{A_1...A_n B_1 ... B_n}$. As we will show below, the $d$ Schmidt coefficients of $\ket{\psi}_{A B}$ are given by the bipartite entanglement (measured with one function) of all possible bipartite splittings of the qubits in B versus the rest including the splitting A versus B. The entanglement is measured here with the geometric measure of entanglement, i.e. $E_g(\psi) = 1- \lambda_1$ with $\lambda_1$ the largest Schmidt coefficient. The geometric measure of entanglement\cite{Shim95} operationally quantifies entanglement by the distance of a state $\ket{\psi}$ to the nearest separable state. It is defined as $E_{g} = 1- \max_{\phi} \Vert \left< \phi | \psi \right> \Vert^2 $, with $\ket{\phi}$ an arbitrary product state. For bipartite pure states it is equivalent to the measure given above.  Note that in the multipartite case this measure has also been used in the context of quantum computing \cite{Gross09}.
\begin{lemma}
\label{2nx2nentanglement}
Let the bipartite state $\ket{\psi} = \sum_{i}  \sqrt{ \lambda_{i} } \ket{i}_A \ket{i}_B \in \mathbb{C}_A^{2^n} \otimes \mathbb{C}_B^{2^n}$ be considered as a  $2n$-qubit state $\ket{\psi} = \sum_{i_1,i_2,...,i_n} \sqrt{ \lambda_{i_1 i_2 ... i_n}} \ket{i_1 i_2 ... i_n}_A \ket{i_1 i_2 ... i_n}_B$. Then, $\ket{\psi}$ is uniquely (up to LUs) determined by the geometric measure of entanglement of all possible bipartite splittings with $k$ qubits in B versus the rest for all $k \in \{1,...n\}$. 
\end{lemma}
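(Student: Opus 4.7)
The plan is to reduce the problem to a Möbius-inversion computation on the Boolean lattice $\{0,1\}^n$. First I would compute the reduced state $\rho_{B_S}$ on any subset $S\subseteq\{1,\ldots,n\}$ of Bob's qubits. Since the state is written in Schmidt form with respect to the computational tensor-product basis, $\rho_{B_S}$ is automatically diagonal in that basis, with eigenvalues
\begin{equation*}
p_{\mathbf{i}_S} \;=\; \sum_{\mathbf{i}_{\bar S}\in\{0,1\}^{n-|S|}} \lambda_{(\mathbf{i}_S,\mathbf{i}_{\bar S})} ,
\end{equation*}
so the geometric measure across the $B_S$-vs-rest cut is $E_g^{(S)} = 1 - \max_{\mathbf{i}_S} p_{\mathbf{i}_S}$, i.e.\ the largest of the $2^{|S|}$ partial sums obtained by grouping Schmidt coefficients according to the bits at the positions of $S$.

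The main obstacle is that $E_g^{(S)}$ reveals only this maximum, not the individual $p_{\mathbf{i}_S}$, so a priori one does not know which linear combination of Schmidt coefficients is being measured. I would get around this by choosing, once and for all, a lexicographic sorted labeling of the Schmidt basis: relabel the Schmidt vectors by binary strings so that $\lambda_{\mathbf{i}}\geq\lambda_{\mathbf{j}}$ whenever $\mathbf{i}$ precedes $\mathbf{j}$ in the natural binary order. With this convention, for every $\mathbf{j}_S$ and every $\mathbf{i}_{\bar S}$ the string $(0_S,\mathbf{i}_{\bar S})$ is componentwise $\leq(\mathbf{j}_S,\mathbf{i}_{\bar S})$, so $\lambda_{(0_S,\mathbf{i}_{\bar S})} \geq \lambda_{(\mathbf{j}_S,\mathbf{i}_{\bar S})}$, and summing over $\mathbf{i}_{\bar S}$ identifies the maximum as $p_{0_S}$. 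Consequently $1 - E_g^{(S)} = f(\bar S)$, where $f(T) := \sum_{\mathbf{i}:\,\mathrm{supp}(\mathbf{i})\subseteq T}\lambda_{\mathbf{i}}$ is the sum of Schmidt coefficients whose binary index vanishes outside $T$.

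Finally, letting $S$ range over the $2^n-1$ non-empty subsets of $\{1,\ldots,n\}$ produces $f(T)$ for every $T\subsetneq\{1,\ldots,n\}$, and the normalisation $\sum_{\mathbf{i}}\lambda_{\mathbf{i}}=1$ supplies the remaining value $f(\{1,\ldots,n\})=1$; this matches exactly the $2^n-1$ free Schmidt parameters, which confirms that no redundant data has been discarded. Standard Möbius inversion on the Boolean lattice then gives
\begin{equation*}
\lambda_{\chi_R} \;=\; \sum_{T\subseteq R}(-1)^{|R\setminus T|}\, f(T)
\end{equation*}
for every $R\subseteq\{1,\ldots,n\}$, so every Schmidt coefficient is recovered from the $2^n-1$ geometric-measure values. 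Because bipartite pure states are LU-classified by their sorted Schmidt vectors, this determines $\ket{\psi}$ up to LU, as claimed.
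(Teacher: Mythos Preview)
Your proof is correct and takes a cleaner route than the paper's. The paper argues by induction on $n$: it observes that the geometric measures for splittings $S\subseteq\{1,\ldots,n-1\}$ coincide with those of a virtual $2^{n-1}\times 2^{n-1}$ state whose Schmidt coefficients are the pair-sums $\mu_{\vec{i}_{n-1}}=\lambda_{\vec{i}_{n-1}0}+\lambda_{\vec{i}_{n-1}1}$; by the inductive hypothesis all these pair-sums are recovered, and then an explicit recursion (anchored at the largest coefficient $\lambda_{1\cdots 1}$ from the full $A|B$ cut) peels off the individual $\lambda$'s one bit-flip at a time. You bypass the induction entirely: once the sorted labeling pins the maximal eigenvalue of every $\rho_{B_S}$ to the all-zero index, each geometric measure is directly identified with a down-set sum $f(\bar S)$ on the Boolean lattice, and a single M\"obius inversion finishes the job. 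Both arguments rest on the same pivotal observation---that after sorting, componentwise domination of binary indices forces domination of Schmidt coefficients, so the ``$\max$'' hidden in the geometric measure becomes a known linear functional of the $\lambda$'s. Your version makes this structure transparent and eliminates the inductive bookkeeping, and it also makes the exact count ($2^n-1$ data for $2^n-1$ free parameters) explicit; the paper's version has the compensating virtue of giving a concrete step-by-step recipe that one could implement without ever writing down the full inclusion--exclusion formula.
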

The proof of the above lemma is given in Appendix \ref{App:OpCharProof}. Note that in order to give this operational characterization, the basis, in which the $2n$-qubit state is written, has to be fixed, e.g. as in Lemma\ \ref{2nx2nentanglement}. This operational characterization of the bipartite entanglement is particularly interesting in case the bipartite entanglement of a multipartite qubit state is considered via a bipartite splitting.

\subsection{Properties of the source and accessible entanglement}
In this subsection we first show that the formula of the source entanglement can be simplified and then discuss general properties of the source and accessible entanglement.
\subsubsection{Simplification of the source entanglement formula}
\label{Sec:SimpliEs}
Here we want to show that by using the results from \cite{Post09} one can simplify the general formula of the source entanglement of bipartite pure states (see Eq.\ \eqref{sourceent}) that was introduced in \cite{Sauer15}. This result is stated in the following Lemma. 
\begin{lemma}
\label{LemmaEs}
The source entanglement of bipartite pure states (see \cite{Sauer15}) given in Eq.\ \eqref{sourceent} can be simplified to
\begin{eqnarray}
\label{SimplEs}
1 - \sum_{\sigma \in \Sigma_d} \frac{(\sum_{k=1}^d \sigma_k \lambda_k )^{d-1}}{\prod_{k=1}^{d-1} (\sigma_k - \sigma_{k+1})}.
\label{Essimp}
\end{eqnarray}
Hence, the source entanglement is a homogeneous function in $\lambda$ of degree d-1.
\end{lemma}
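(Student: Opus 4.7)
The plan is to reduce the simplification to a single vanishing identity for weighted sums over the symmetric group, and then to invoke the permutohedron-volume result of~\cite{Post09}.

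First I would exploit the normalisation $\sum_{k=1}^d \lambda_k = 1$ to rewrite the constant shift as $\tfrac{d+1}{2} = \tfrac{d+1}{2}\sum_k \lambda_k$, and apply the binomial theorem to expand, with $a_\sigma := \sum_k \sigma_k \lambda_k$ and $c := \tfrac{d+1}{2}$,
\begin{equation*}
(a_\sigma - c)^{d-1} \;=\; \sum_{j=0}^{d-1}\binom{d-1}{j}\,a_\sigma^{j}\,(-c)^{d-1-j}.
\end{equation*}
Swapping the order of summation then yields
\begin{equation*}
\sum_{\sigma\in\Sigma_d}\frac{(a_\sigma - c)^{d-1}}{\prod_{k=1}^{d-1}(\sigma_k-\sigma_{k+1})} \;=\; \sum_{j=0}^{d-1}\binom{d-1}{j}(-c)^{d-1-j}\,T_j(\lambda),
\end{equation*}
where I introduce
\begin{equation*}
T_j(\lambda) \;:=\; \sum_{\sigma\in\Sigma_d}\frac{a_\sigma^{\,j}}{\prod_{k=1}^{d-1}(\sigma_k-\sigma_{k+1})}.
\end{equation*}

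The heart of the argument is then to establish that $T_j(\lambda)\equiv 0$ as a polynomial in $\lambda$ for every $j\in\{0,1,\ldots,d-2\}$. This is precisely the Postnikov-type vanishing identity that emerges in the study of permutohedron volumes in~\cite{Post09}: the expression $T_{d-1}(\lambda)$ is proportional to the (normalised) volume of the permutohedron cut out by the Schmidt vector, while the analogous sums in lower powers of $a_\sigma$ act as divided-difference operators that annihilate any polynomial of degree $<d-1$ in the $\sigma_k$. As a sanity check, for $d=2,3$ one can verify the vanishing directly by evaluating the six (resp.\ two) summands. Once this identity is in hand, only the $j=d-1$ term survives the binomial sum, and one obtains
\begin{equation*}
\sum_{\sigma}\frac{(a_\sigma - c)^{d-1}}{\prod_{k=1}^{d-1}(\sigma_k-\sigma_{k+1})} \;=\; T_{d-1}(\lambda),
\end{equation*}
which, substituted back into Eq.~\eqref{sourceent}, produces exactly the claimed simplified expression~\eqref{Essimp}. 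Homogeneity of degree $d-1$ of $1-E_s$ is then immediate: every summand $(\sum_k\sigma_k\lambda_k)^{d-1}$ is a monomial combination of degree $d-1$ in the $\lambda_i$, and rescaling $\lambda\mapsto t\lambda$ multiplies the whole sum by $t^{d-1}$.

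The main obstacle is Step~2, i.e.\ the vanishing of $T_j$ for $j<d-1$. The identity is known but must be transferred faithfully from~\cite{Post09}, where the denominators typically appear as a Vandermonde-style product $\prod_{i<j}(\sigma_i-\sigma_j)$ rather than the consecutive-difference product $\prod_{k=1}^{d-1}(\sigma_k-\sigma_{k+1})$ that we have here. Checking that the relevant reformulation goes through with our denominator -- either by recognising $T_{d-1}(\lambda)$ as the mixed-Eulerian volume of the $\lambda$-permutohedron and using that its generating function starts at top degree, or by a direct divided-difference manipulation -- is the only step that requires genuine care; the remainder is bookkeeping.
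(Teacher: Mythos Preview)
Your outline is the same as the paper's: expand $(a_\sigma-c)^{d-1}$ by the binomial theorem and show that the contributions $T_j(\lambda)$ with $j<d-1$ all vanish. The paper, however, does not leave your ``main obstacle'' to a citation but supplies a short self-contained argument that also dissolves your worry about the denominator form. The divided symmetrization in~\cite{Post09} is \emph{defined} with the consecutive-difference product $\prod_{k=1}^{d-1}(x_k-x_{k+1})$, so there is nothing to transfer. To see that $\langle f\rangle=0$ whenever $\deg f<d-1$: put the sum over the common denominator $\Delta=\prod_{i<j}(x_i-x_j)$, so that $\langle f\rangle=g/\Delta$. Since $\langle f\rangle$ is symmetric by construction and $\Delta$ is antisymmetric, the numerator $g$ must be antisymmetric; but any nonzero antisymmetric polynomial is divisible by $\Delta$ and hence has degree at least $\binom{d}{2}$. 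A degree count gives $\deg g\le\deg f+\binom{d}{2}-(d-1)<\binom{d}{2}$, forcing $g=0$. Taking $f(x)=\bigl(\sum_k x_k\lambda_k\bigr)^{j}$, which has degree $j$ in the $x$-variables, and then specialising $x=(1,\dots,d)$ yields $T_j(\lambda)=0$ for every $j<d-1$ directly---no permutohedron volumes or mixed-Eulerian machinery is needed.

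Incidentally, your first step of rewriting $c=\tfrac{d+1}{2}$ via the normalisation $\sum_k\lambda_k=1$ is harmless but unnecessary: the vanishing of $T_j$ is a statement about the $\sigma$-sum alone and does not depend on how the constant shift is expressed.
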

\begin{proof}
The proof is based on a result presented in \cite{Post09} concerning the divided symmetrization of a polynomial $f(x_1,...x_d)$, i.e. $\left< f \right> = \sum_{\sigma \in \Sigma_d} \sigma\left( \frac{f(x_1,...,x_d)}{\prod_{k=1}^{d-1} (x_k - x_{k+1})}\right)$. For all functions $f(x_1,...x_d)$ with degree smaller $d-1$ the divided symmetrization vanishes, i.e. $\left< f \right> = 0$. This can be easily seen by writing $\left< f \right> = \frac{g}{\Delta}$, as the Vandermonde determinant $\Delta = \prod_{1 \leq i<j \leq d} (x_i - x_j)$ is the common denominator of all the terms in the divided symmetrization. Note that $g$ has to be an antisymmetric polynomial, as the divided symmetrization is by definition symmetric and the Vandermonde determinant $\Delta$ is an antisymmetric polynomial. Furthermore, $\Delta$ is the antisymmetric polynomial of the smallest degree, deg $(\Delta) = \binom{d}{2}$. This is due to the fact that any antisymmetric polynomial vanishes if two of the variables are equal and thus, must have $x_i - x_j$ as a factor for all $i \neq j$. Now for any polynomial $f(x_1,...x_d)$ with deg $(f(x_1,...x_d)) < d-1$ the degree of the numerator of $\left< f \right>$ is smaller than the degree of the denominator. Therefore, the degree of the antisymmetric polynomial $g$ has to be smaller than the degree of the Vandermonde determinant, which, as explained above, is not possible. Hence, $g$ has to be equal to zero and therefore, the divided symmetrization vanishes, i.e. $\left< f \right> = 0$. Using this property of the divided symmetrization together with the binomial theorem for the numerator in the original formula of $E_s$ in Eq.\ \eqref{sourceent} one obtains directly the simplified version in Eq.\ \eqref{Essimp}.  
\end{proof}

As mentioned before the formula of the source entanglement (also the simplified version in Eq.\ \eqref{Essimp}) is valid for sorted Schmidt vectors $\lambda^{\downarrow}$. Investigating the properties of $E_s$ is often easier if the ordering is automatically fixed. Thus, we might want to change the variables in $E_s$ for certain problems and write the ordered Schmidt vector in terms of the extreme points of the convex set of sorted vectors, i.e.
\begin{eqnarray}
\lambda^{\downarrow} = \sum_{i=1}^d p_i e_i = M p,
\label{piSchmidt}
\end{eqnarray}
with $p = (p_1,...,p_d)$ and $M = \left( \begin{smallmatrix} 1 & 1/2 & 1/3 & \dotsb & 1/d \\ 0 & 1/2 & 1/3 & \dotsb & 1/d \\ 0 & 0 & 1/3 & ...& 1/d \\  0& 0 & 0&   & 1/d \\[-0.2cm] \vdots & \vdots  & \vdots  & & \vdots  \\  0 & 0 & 0 & ...& 1/d  \end{smallmatrix} \right)$. Using Eq.\ \eqref{piSchmidt} and the fact that the divided symmetrization of a polynomial vanishes, if the degree of the numerator is smaller than the degree of the denominator, leads to
\begin{eqnarray}
E_s(\psi) = 1 - \sum_{\sigma \in \Sigma_d} \frac{(\sum_{k=1}^{d-1} \frac{1}{k} p_k \sum_{l=1}^k \sigma_l)^{d-1}}{\prod_{k=1}^{d-1} (\sigma_k - \sigma_{k+1})}.
\label{Esp} 
\end{eqnarray}
Note that this formula does not depend on $p_d$, as $\sum_{l=1}^d \sigma_d = \frac{d(d+1)}{2}$ is a constant and thus, this term in the sum cancels, as the divided symmetrization of it vanishes (see above). 

\subsubsection{General properties of $E_s$ and $E_a$}
\label{Sec:GenProp}
Before investigating in which parameter region the values of $E_a$ and $E_s$ are lying, we summarize here some properties of these entanglement measures.  \par
Let us first show that the entanglement measures $E_a$ and $E_s$ are no entanglement monotones.
An entanglement monotone for pure states \cite{vidal1} is a function that is nonincreasing on average under LOCC, i.e. 
\begin{eqnarray}
E_{mon} ( \psi) \geq \sum_i p_i E_{mon} (\psi_i),
\label{Emon}
\end{eqnarray}
for any pure state ensemble $\{p_i, \ket{\psi_i} \}$ that is obtained from $\ket{\psi}$ via LOCC. A widely used feature of entanglement monotones for pure states is the fact that the convex roof construction \cite{horoRev} leads to entanglement measures on mixed states. Note that for the source and accessible entanglement such a construction is not necessary as they are already defined for any quantum state, including mixed states of any system size. Moreover, these two measures are defined in an operational way and thus, it is not surprising that instead of fulfilling condition\ \eqref{Emon} they only fulfill the physical LOCC monotonicity condition. For multipartite quantum states of four or more qubits it is easy to show that $E_s$ and $E_a$ are indeed increasing on average under LOCC, thus violating cond.\ \eqref{Emon}, by simply considering an isolated state $\ket{\psi}_{iso}$. Such a state can neither be reached nor transformed into any other state via LOCC. However, the state $\ket{\psi}_{iso}$ can be transformed with a non vanishing probability into a non-isolated state, $\ket{\psi_1}$. Denoting the other states in the ensemble of states into which $\ket{\psi}_{iso}$ is transformed by $\ket{\psi_i}$, we have $0 = E_s(\psi_{iso}) < p E_s(\psi_1) + \sum_i p_i E_s(\psi_i)$. A similar argument holds for the accessible entanglement. Note, however, that for e.g. pure three-qubit states within the W-class the source and accessible entanglement are indeed entanglement monotones. That is, the measures do not increase on average under LOCC for the W-class. Similar functions, i.e. entanglement monotones for states in the W-class, were also introduced in \cite{Chitambar12}. \par
Also for bipartite states one can easily construct a counter-example to show that cond.\ \eqref{Emon} is violated by the bipartite measures. In order to do so, consider a transformation of the state $\lambda_{\psi} = (0.6, 0.3, 0.1)$ into an ensemble containing the two states corresponding to $\lambda_{\psi_1} = (0.8, 0.15, 0.05)$ and $\lambda_{\psi_2} = (0.57, 0.32, 0.11)$ we obtain for the source and accessible entanglement respectively $E_s(\psi) = 0.63, \ E_a(\psi) = 0.36$ and $p_1 E_s(\psi_1) + p_2 E_s(\psi_2) = 0.631$, $ p_1 E_a(\psi_1) + p_2 E_a(\psi_2) = 0.37$. Thus, the bipartite source and accessible entanglement are no entanglement monotones\footnote{Note that as $E_s$ and $E_a$ are not SLOCC-invariant (see e.g. \cite{Gour10}), taking some root of these measures will also not lead to entanglement monotones. In the bipartite case one can also find simple examples verifying this. Furthermore, as $ E_s(\psi_{iso})= 0$ it is also clear for multipartite states that taking some root of $E_s$ will not lead to an entanglement monotone.}.
Note that also the Renyi-entropies, i.e. $S_{\alpha}(\rho) = \frac{1}{1-\alpha} \log(\tr(\rho^{\alpha})$, are no entanglement monotones for $\alpha>1$, $\alpha \neq \infty$ \cite{horoRev}, as they are not concave functions of the Schmidt coefficients. Hence, $S_{\alpha}$ cannot simply be generalized to mixed states by the convex roof construction (see e.g. \cite{horoRev}) for $\alpha>1$. However, the Renyi-entropies are Schur concave functions for all $\alpha$ and are thus valid entanglement measures for pure states. As mentioned before, in contrast to the Renyi-entropies the source and accessible entanglement are also defined for mixed states. \par
Let us now show that both the source and accessible entanglement are not additive on tensor products, i.e.
\begin{equation}
E_{s(a)}(\psi \otimes \phi) \neq E_{s(a)}(\psi) + E_{s(a)}(\phi).
\label{additive}
\end{equation}
The property of additivity is especially useful for entanglement measures if more copies of a state are considered. The source and the accessible entanglement are both unsurprisingly not additive. 
The non-additivity in Eq.\ \eqref{additive} can be easily proven by choosing two copies of the same 2-qubit state $\lambda_{\psi} = (\lambda_1, 1-\lambda_1)$.  We get for the source entanglement 
\begin{eqnarray}
E_s(\psi)  &=& 2 (1-\lambda_1), \nonumber \\
E_s^{4\rightarrow 2}(\psi) &=& 4(1-\lambda_1)^3 ,\\  \nonumber  E_s(\psi \otimes \psi) &=& 2 (1 - \lambda_1)^2 (1+ 2 \lambda_1(1+ 6 \lambda_1( 2\lambda_1 -1 ))),
\end{eqnarray}
where we chose the source entanglement of all $ 2 \times 2$ and all $4 \times 4$ states that can reach the $ 2 \times 2$ state $\ket{\psi}$ for the single copy case, such that we compare also the measures with the same dimension. However, for both measures the value of a single copy is not proportional to the value of two copies of the same state $\ket{\psi}$. For the accessible entanglement we obtain a similar result. \par
Another obvious property of the source and the accessible entanglement is that neither of them coincide with the entanglement of formation for pure states. As a matter of fact many known entanglement measures reduce to the entanglement of formation for pure states. Other examples of measures that do not have this property are the negativity \cite{negativity} or the robustness of entanglement \cite{robustness}. \par
Let us end this section with a brief discussion on the faithfulness of $E_s$ and $E_a$. A measure is faithful if it vanishes only on separable states and is strictly positive for entangled states. In the multipartite setting it is clear, that $E_s$ and $E_a$ cannot be faithful, due to the existence of isolated states. Another well known multipartite measure that is not faithful is the 3-tangle \cite{Coffman00}, which vanishes for all 3-qubit states in the W-class. In the bipartite case both $E_s$ and $E_a$ are, however, faithful. This can be easily shown as, on the one hand, for any mixed state, there exists a pure state that can reach this state, e.g. $\ket{\Phi^+}$ can be transformed into any state $\rho$, and on the other hand any entangled mixed state $\rho$ can be transformed into some other entangled state. In particular $\rho$ can be transformed into $\rho' = p \rho + (1-p) \proj{a,b}$, which is entangled for certain values of $p$ and product states $\ket{a,b}$, as the set of separable states is closed.

\section{Physical region of $E_s$, $E_a$ and other entanglement measures}
\label{physregionEsEa}
In this section we investigate the possible values $E_s$ and/or $E_a$ can take in terms of other measures. For measures that are given in terms of a polynomial in the Schmidt coefficients the Positivstellensatz (see Eq.\ \eqref{positivstellen}) gives the analytic solution to the problem. We investigate also the boundaries of these sets of states in terms of the entanglement measures and highlight in some figures the states in the source and accessible set in terms of the measures of some randomly chosen state. Furthermore for certain polynomial and some non-polynomial measures we plot the values the measures can have for low-dimensional bipartite states. \par
In \cite{BeSa03} a related problem has been considered. There, the authors investigate the optimization of functions of the form
\begin{equation}
S_f(\rho) = \sum_i f(\lambda_i),
\label{genEntr}
\end{equation}
given the value of some other function of this form. Note that the functions $S_f$ can be written as a sum over a function of a single Schmidt coefficient. Examples of such measures are entropy measures, e.g. the entanglement of formation given by \cite{Be}
\begin{equation}
S(\rho_A) = E_f(\psi)=- \tr(\rho_A \log(\rho_A)) = - \sum_i \lambda_i \log(\lambda_i),
\label{entropy}
\end{equation}
with $\rho_A = \tr_B(\proj{\psi})$.
The Schmidt vectors which are maximizing and minimizing the function $S_f(\rho)$ given $S_{f'}(\rho)$, respectively, have been shown in \cite{BeSa03} to be
\begin{align}\label{lambdaopt}
\{\lambda_i\}_{max} =&\{ \lambda_1, \lambda_0, ..., \lambda_0\},   \lambda_0 =\frac{1-\lambda_1}{d-1}\leq    \lambda_1   \\ \nonumber
\{\lambda_i\}_{min} =&\{ \lambda_1,..., \lambda_1, \lambda_0, 0, ...,0\}, \lambda_0=  1 - k \lambda_1, k= \lfloor 1/ \lambda_1 \rfloor.
\end{align}
Hence, both classes of states are parametrized by the single Schmidt coefficient $\lambda_1$. Note that in terms of the parameters $p_i$ (see Eq.\ \eqref{piSchmidt}) one can easily see that the states in Eq.\ \eqref{lambdaopt} correspond to convex combinations of exactly two extreme points of the convex set of sorted vectors. That is for $\{\lambda_i\}_{max}$ and $\{\lambda_i\}_{min}$ there are all but exactly two $p_i$`s equal to zero, i.e. $\{p_i\}_{max} = (p_1,0,...,0,p_d)$ and $\{p_i\}_{min} =\{0,0,...,p_k,p_{k-1},0,..,0\}$ with $k \in \{1,..,d\}$. It is important to note here that the results from \cite{BeSa03} do not apply for $E_s$ and $E_a$ as both measures involve products of Schmidt coefficients and can thus not be written as in Eq.\ \eqref{genEntr}. In spite of that, the states $\{\lambda_i\}_{max}$ and $\{\lambda_i\}_{min}$ play also important roles in our investigations as shown in the following sections. 

\subsection{Relation between source and accessible entanglement}
\label{Sec:RelEsEa}
Here we consider $3 \times 3$ and $4 \times 4$ bipartite quantum states (we skip $2 \times 2$ states, as for these states any measure depends only on a single Schmidt coefficient) and show plots for several polynomial measures. For the formulas of the plotted measures see Appendix A. In the second part of this section we consider similar plots but include the region of pairs of entanglement measures reachable by states in the source and accessible set (see Eq.\ \eqref{MaMs}). That is, we consider a random state $\ket{\phi}$  and illustrate which values of $E_s, E_a$ the states in the source and accessible set of $\ket{\phi}$ can have. 
\subsubsection{Bipartite $3 \times 3$ states}
Let us first consider two measures, that uniquely characterize the bipartite entanglement of $3 \times 3$ states as shown in \cite{Sauer15}, i.e. the source entanglement $E_s$ and the source entanglement of all $4 \times 4$ states, that reach a certain $3 \times 3$ state, i.e. $E_s^{4 \rightarrow 3}$ (see Appendix\ \ref{sec:AppendixA}). That is, given the value of $E_s, E_s^{4 \rightarrow 3}$ the state is (up to LUs) uniquely defined. \par
First note that $E_s$ and $E_s^{4 \rightarrow 3}$ can lead to a different order of $3 \times 3$ states for, i.e. for two states $\ket{\phi_1}$ and $\ket{\phi_2}$ we can have $E_s(\phi_1) > E_s(\phi_2)$ but $E_s^{4 \rightarrow 3}(\phi_1) < E_s^{4 \rightarrow 3}(\phi_2)$. This property is also observed with many other entanglement measures such as the negativity and the entanglement of formation, as it is a well known fact that different entanglement measures measure different aspects of entanglement and thus, it is not surprising that the order of the states can change for different measures. However, here the source entanglement and its generalization are very closely related and therefore, measure similar aspects of entanglement, but still they impose a different order on the states. 
Furthermore, the range of possible values for the pairs $(E_s, E_s^{4 \rightarrow 3})$ is surprisingly well confined, as can be seen in Fig.\ \ref{EsEs4}. For example, if $E_s(\psi) = 0.4$, $E_s^{4 \rightarrow 3}(\psi)$ can only have values between approximately $0.09$ and $0.2$. That is a state with some value for $E_s$ can only have values for $E_s^{4 \rightarrow 3}$ within a small range. When comparing other measures we will see that the range can be much larger than in this case. 
We also investigate the boundaries in Fig.\ \ref{EsEs4}. Let us first note that in all figures one can go on the boundary lines always from right to left via LOCC, i.e. from one random state on a boundary line we can obtain all other states to the left of the boundary deterministically. This can be easily checked for all boundaries, that we parametrized (see below). For all other boundaries we also checked this behavior numerically.
\begin{figure}[h!]
\centering
\includegraphics[width=0.4\textwidth]{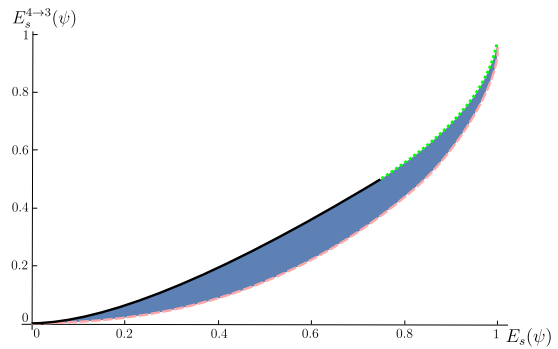}
\caption{Source entanglement of $3 \times 3$ states versus the source entanglement of all $4 \times 4$ states that reach the considered $3 \times 3$ state $\ket{\psi}$. Note that these two measures uniquely characterize $3 \times 3$ bipartite entanglement \cite{Sauer15}.}
\label{EsEs4}
\end{figure}
The boundaries in Fig.\ref{EsEs4} are indeed given by the states in Eq.\ \eqref{lambdaopt}. Interestingly, the dashed pink line corresponds to the states with $\lambda_2 = \lambda_3$, i.e. $\{\lambda_a\}_{max}= \{ \lambda_1, \frac{1-\lambda_1}{2}, \frac{1-\lambda_1}{2} \}$, the black line and the dotted green line are given by $\lambda_3 = 0$ and $\lambda_1 = \lambda_2$, respectively, i.e. $\{\lambda_b\}_{min}= \{ \lambda_1, 1-\lambda_1, 0 \}$, $\{\lambda_c\}_{min}= \{ \lambda_1, \lambda_1, 1-2 \lambda_1 \}$. That is these states do not only optimize a function $S_f(\rho)$ (see Eq.\ \eqref{genEntr}) given another function $S_{f'}(\rho)$, but also $E_s$ given $E_s^{4\rightarrow 3}$. This can also be easily seen using Fig.\ \ref{3x3volumes}, as the thick dashed lines that enclose the set of sorted Schmidt vectors, i.e. the boundary lines of the set of states excluding LU-equivalent states, correspond to the boundary lines in Fig.\ \ref{EsEs4}. As mentioned above these lines are parametrized by the convex combination of two extreme points of the set of sorted vectors. This can also be seen in Fig.\ \ref{3x3volumes}, where the dashed lines connect the vertices given by the separable state $(1,0,0)$, the maximally entangled $ 2 \times 2$ state $(1/2, 1/2,0)$ and the maximally entangled $3 \times 3$ state $(1/3, 1/3, 1/3)$. Thus, for example all states on the pink line in Fig.\ \ref{EsEs4} are given by the convex combination $p (1/3, 1/3, 1/3) + (1-p)(1,0,0)$, which is equivalent to the above defined $\{\lambda_a\}_{max}$. In the following we will see that when considering different measures the boundaries are no longer equivalent to the ones in Fig.\ \ref{3x3volumes}.  \par
The next measures we consider are the source and the accessible entanglement of a $3 \times 3$ state. Note that these two measures do not uniquely characterize $3 \times 3$ entanglement and the main  differences to the above compared measures are on the one hand that the range of values is now much broader and on the other hand that the boundaries are no longer solely given by the states in Eq.\ \eqref{lambdaopt}. 
\begin{figure}[h!]
\centering
\includegraphics[width=0.4\textwidth]{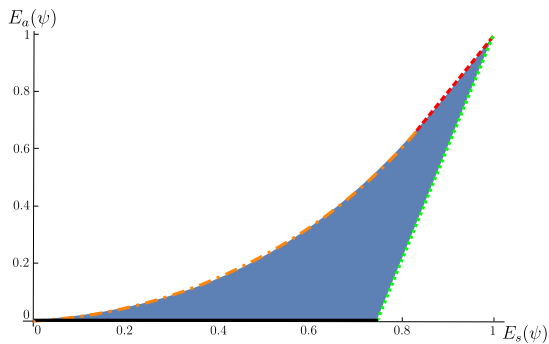}
\caption{Source entanglement versus the accessible entanglement of a $3 \times 3$ state $\ket{\psi}$.}
\label{3x3EsEa}
\end{figure}
In Fig.\ \ref{3x3EsEa} the dotted green and the black boundary correspond to the same states as in Fig.\ \ref{EsEs4}. In contrast to Fig.\ \ref{EsEs4} the states that maximize the accessible entanglement for a fixed value of the source entanglement are parametrized by $\lambda_2 = 1/3, \ 1/6 \leq \lambda_3 \leq 1/3$ for the dotted red line and $\lambda_2 = \sqrt{\lambda_3 ( 1- 2 \lambda_3)}, \ 0 \leq \lambda_3 < 1/6$ for the dotdashed orange line and furthermore, the pink line is no longer on the boundary. Hence, the states in Eq.\ \eqref{lambdaopt} are not enough to completely characterize the boundaries of Fig.\ \ref{3x3EsEa} and the boundaries are thus also not equal to the ones in Fig.\ \ref{3x3volumes}.   \par
Given the operational meaning of $E_s$ and $E_a$ it is now appealing to investigate what the entanglement properties (in terms of all these measures) are for the states that can be reached by and for the ones that can be transformed into one specific state $\ket{\phi}$. That is, we analyze given a state $\ket{\phi}$ with a certain value of $E_s$ and $E_a$, which pairs of values are in the source and accessible set of this state $\ket{\phi}$. In order to illustrate that, we include in Fig.\ \ref{3x3EsEs4MaMs}, \ref{3x3EsEaMaMs} the source and accessible entanglement of all states which are in the source or accessible set of $\ket{\phi}$. Let us first investigate the source entanglement and its generalization.
\begin{figure}[h!]
\centering
\includegraphics[width=0.4\textwidth]{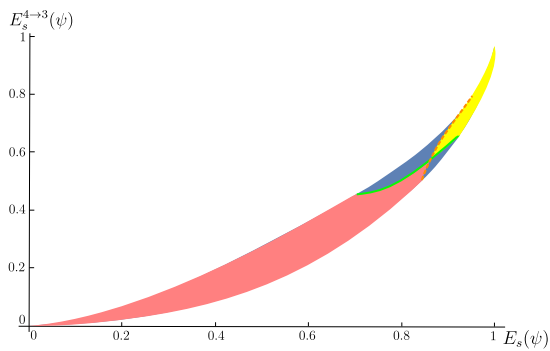}
\caption{$E_s$ versus $E_s^{4\rightarrow 3}$ including the source and the accessible set of a $3 \times 3$ state $\ket{\phi}$ with $\lambda_{\phi} = (0.52,0.28, 0.2)$ with boundaries.}
\label{3x3EsEs4MaMs}
\end{figure}
In the Fig.\ \ref{3x3EsEs4MaMs} the connected pink (lower left) area and the connected yellow (upper right) area correspond to the entanglement of the states in the accessible and the source set of a state $\ket{\phi}$, respectively. Hence, these sets include all states that can either be reached or can reach a certain state $\ket{\phi}$. The boundaries of the two sets, i.e. the green and the dashed orange line, are given by states with one of the two entanglement monotones $E_2(\psi), E_3(\psi)$ being equal to that of the state $\ket{\phi}$. More precisely, the green line corresponds to states with $E_2(\psi) = E_2(\phi)$ and the orange line is parametrized by states fulfilling $E_3(\psi) = E_3(\phi)$. Moreover, as we can see in Fig.\ \ref{3x3EsEs4MaMs} most states are LOCC-comparable, i.e. the state $\ket{\phi}$ can either reach or be reached by most of the states. We can conclude this from Fig.\ \ \ref{3x3EsEs4MaMs} as any point in the figure corresponds to a single state, due to the fact that $E_s$ and $E_s^{4\rightarrow 3}$ uniquely characterize the entanglement. However, there exist of course incomparable states. Interestingly, even though they uniquely characterize the entanglement of the states, there exist also incomparable states for which both $E_s$ and $E_s^{4 \rightarrow 3}$ are smaller but the states are not reachable by $\ket{\phi}$. The boundary lines of the pink and yellow set in Fig.\ \ref{3x3EsEs4MaMs} are as the boundary lines of the set of all states in Fig.\ \ref{EsEs4} equivalent to the ones in Fig.\ \ref{3x3volumes}, in which the source and accessible set of a state $\ket{\phi}$  is shown in terms of the Schmidt coefficients. Again this is only the case if we consider $E_s$ and its generalizations and we will see especially for the more involved $4 \times 4$ states that the boundaries of the pink and yellow sets are no longer simply parametrized by states with $E_i(\psi) = E_i(\phi)$ for some $i$ if we compare different measures. \par
Let us now consider the same state $\ket{\phi}$ and its source and accessible set respectively, as a function of $E_a$ versus $E_s$. As can be seen in Fig.\ \ref{3x3EsEaMaMs} there occur more boundaries. 
\begin{figure}[h!]
\centering
\includegraphics[width=0.4\textwidth]{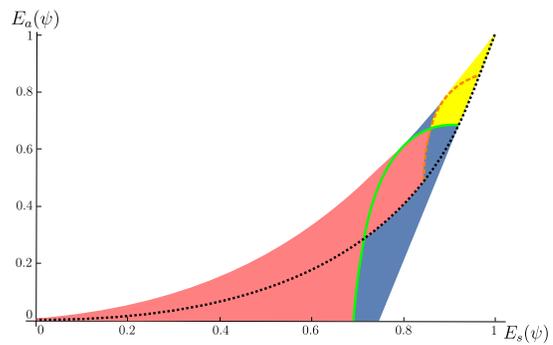}
\caption{$E_s$ versus $E_a$ including the source and the accessible set of a $3 \times 3$ state $\ket{\phi}$ with $\lambda_{\phi} = (0.52,0.28, 0.2)$ with boundaries.}
\label{3x3EsEaMaMs}
\end{figure}
In Fig.\ \ref{3x3EsEaMaMs} the green and dashed orange line are the same as in Fig.\ \ref{3x3EsEs4MaMs}. The dotted black line corresponds to states $\ket{\psi}$ with $\lambda_2 = \lambda_3$. The main difference to Fig.\ \ref{3x3EsEs4MaMs} is that an additional boundary, namely the dotted black line, is required. As these measures do not uniquely characterize the entanglement, one point in Fig.\ \ref{3x3EsEaMaMs} can correspond to different states. This is for example the case for the point in the pink accessible set where the dotted black and the green line intersect. The state on the green line is of the form $(\lambda_{\phi_1}, \lambda_2^{green}, \lambda_3^{green})$, whereas the state on the dotted black line is given by $(\lambda_1^{black}, \lambda_2^{black},\lambda_2^{black})$. \par
Interestingly, it seems to be a general feature (see also the $4 \times 4$ case), that for $E_s$ and its generalizations the boundaries are easy to determine, whereas for the plots including other measures, this is not the case. In the first case, the states corresponding to the boundaries are either of the same form as in Eq.\ \eqref{lambdaopt} or in case of the boundaries of the pairs corresponding to states in $M_s$ and $M_a$, i.e. the pink and yellow sets, have fixed values for the entanglement monotones $E_i$. We elaborate on that also in the next subsection, where $4 \times 4$ states are considered. \par

\subsubsection{Bipartite $4 \times 4$ states}
We consider here the case of $4 \times 4$ states. In addition to the investigations performed for $3 \times 3$ states, we compare here also the values obtainable for two 2-qubit states (viewed as $4 \times 4$ states, see Fig.\ \ref{4x4EsEa}).
Let us first consider the measures, which uniquely characterize the entanglement of 4-dimensional states, i.e. $E_s, E_s^{5\rightarrow 4}, E_s^{6\rightarrow 4}$ (as can be also seen with the help of the Positivstellensatz). The range of possible values is again quite constrained if we consider $E_s$ and its generalizations, whereas we get a wider range for values of the pair $(E_s,E_a)$. 
\begin{figure}[h!]
\centering
\includegraphics[width=0.4\textwidth]{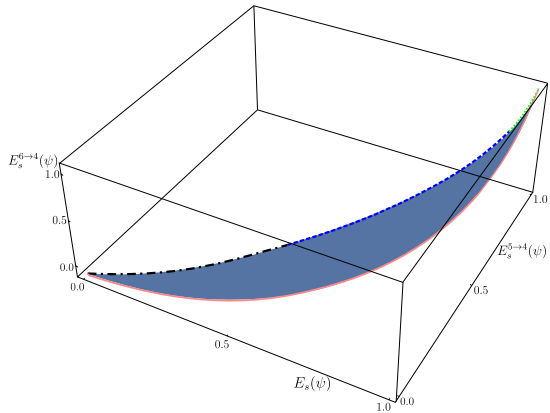}
\caption{$E_s$ versus its generalizations $E_s^{5\rightarrow 4}$ and $E_s^{6 \rightarrow 4}$.}
\label{4x4EsEs5Es6}
\end{figure}
Similar to the $3 \times 3$ case in Fig. \ref{4x4EsEs5Es6} the boundaries are again given by the states in Eq.\ \eqref{lambdaopt}. Thus, the pink line corresponds to the state $\{\lambda_a\}_{max}=\{ \lambda_1, \frac{1-\lambda_1}{3}, \frac{1-\lambda_1}{3},\frac{1-\lambda_1}{3} \}$ and the dotdashed black, the dashed blue and the dotted green line are given by the states $\{\lambda_b\}_{min}=\{ \lambda_1,1- \lambda_1,0,0\}$, $\{\lambda_c\}_{min}=\{\lambda_1,\lambda_1,1-2 \lambda_1,0\}$ and $\{\lambda_d\}_{min}=\{\lambda_1,\lambda_1,\lambda_1,1-3 \lambda_1\}$, respectively. Hence, the boundaries are again completely characterized by the states in Eq.\ \eqref{lambdaopt}, as it is for instance also the case considering only $E_s$ and $E_s^{5 \rightarrow 4}$. \par
Let us now, as in the $3 \times 3$ case, investigate the values of $E_s$ and its generalizations for states in the source and accessible set of a certain state $\ket{\phi}$.   
\begin{figure}[h!]
\centering
\includegraphics[width=0.4\textwidth]{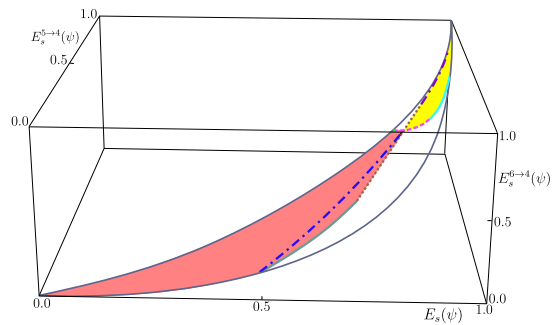}
\caption{$E_s$ versus $E_s^{5\rightarrow 4}$ versus $E_s^{6 \rightarrow 4}$ of the source (yellow) and accessible (pink) set of a state $\ket{\phi} = (0.4, 0.35, 0.2, 0.05)$. Note that we do not show all the states that are neither in the source nor the accessible set of $\ket{\phi}$.}
\label{4x4EsEs5Es6MaMs}
\end{figure}
To illustrate the result more clearly we show in Fig.\ \ref{4x4EsEs5Es6MaMs} only the values of the three measures for all states in either the source or accessible set of the state $\ket{\phi}$ and the boundaries (in light blue) of the set of all $4 \times 4$ states. The boundaries of the two sets are as in the $3 \times 3$ case given by states for which one of the entanglement monotones $E_i$ is fixed by the value it takes for the state $\ket{\phi}$. More precisely, the states on the gray line are parametrized by $E_4(\psi) = E_4(\phi)$ and $\lambda_2 = \lambda_3$, on the dotted brown line by $E_3(\psi) = E_3(\phi)$ and $E_4(\psi) = E_4(\phi)$, on the dotdashed blue line by $E_4(\psi) = E_4(\phi)$ and $\lambda_3 = \lambda_4$, on the dashed magenta line by $E_2(\psi) = E_2(\phi)$ and  $E_3(\psi) = E_3(\phi)$, on the green line by $E_2(\psi) = E_2(\phi)$ and $\lambda_4 = 0$, on the turquoise line by $E_2(\psi) = E_2(\phi)$ and $\lambda_3 = \lambda_4$ and on the dotdashed purple line by $E_4(\psi) = E_4(\phi)$ and $\lambda_1 = \lambda_2$. To summarize the states on the boundaries of the pink and yellow sets fulfill that at least one of the monotones $E_i$ is equal to the corresponding monotone of the state $\ket{\phi}$ and if only one fulfills this equality in addition two Schmidt coefficients of the states are equal to each other. Thus, these states are also similar to the ones given in Eq.\ \eqref{lambdaopt}. \par Similar to the $3 \times 3$ case the situation gets less transparent if we consider $E_s$ versus $E_a$. There, again the boundaries are not all given by those described above. \par  Note again that $E_s$ and $E_a$ do not uniquely characterize the entanglement of $4 \times 4$ states and thus, one point in the plot below can correspond to several different states. In the subsequent figure we also illustrate in orange the entanglement measures of $4 \times 4$ states, which are of the form $\ket{\Phi} \otimes \ket{\Psi}$, where both $\ket{\Phi}$ and $\ket{\Psi}$ are 2-qubit states. It is interesting to observe that these states almost optimize $E_a$ given $E_s$. Note, however, again that these measures do not uniquely define the states. The values of pairs $(E_a, E_s)$ for which there also exists a state $\ket{\psi} \neq \ket{\Phi} \otimes \ket{\Psi}$ can be read off combining Fig.\ \ref{4x4EsEa} with the figure for $4 \times 4$ states which is the equivalent to Fig.\ \ref{3x3EsEaPsuccto} for $3 \times 3$ states. Note that the red line is parametrized by all states that are given by two copies of a 2-qubit state, i.e.  $\ket{\Psi} \otimes \ket{\Psi}$.
\begin{figure}[h!]
\centering
\includegraphics[width=0.4\textwidth]{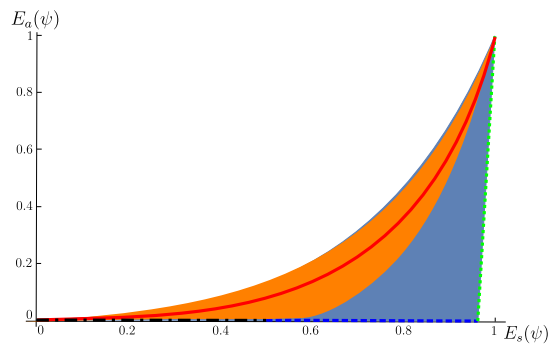}
\caption{Source entanglement versus the accessible entanglement of all $4 \times 4$ states $\ket{\psi}$.}
\label{4x4EsEa}
\end{figure}
The main difference of Fig.\ \ref{4x4EsEa} to the $3 \times 3$ case is that not all states in Eq.\ \eqref{lambdaopt} lie even on the boundary of the figure. That is the states on the pink line in Fig.\ \ref{4x4EsEs5Es6}, given by $\{\lambda_a\}_{max}=\{ \lambda_1, \frac{1-\lambda_1}{3}, \frac{1-\lambda_1}{3},\frac{1-\lambda_1}{3} \}$, are not optimizing $E_a(\psi)$ for a fixed value of $E_s(\psi)$. The other states from Eq.\ \eqref{lambdaopt} on the dotdashed black, the dotted green and the dashed blue line are still lying on the boundary (for a parametrization see above). The other boundary could of course be easily computed by maximizing $E_a(\psi)$ given $E_s(\psi)$ numerically \footnote{Note that one can of course also use Lagrange multipliers, however, this is cumbersome given the expression of $E_a$ (see Appendix \ref{sec:AppendixA}).}.

\subsection{Relation between source and accessible entanglement by the Positivstellensatz}
\label{Sec:EsEaPositivstellen}
In this section we review the idea of efficiently finding certificates for when a system of polynomial equations and inequalities has no solution in $\mathbb{R}$ (see \cite{Parr03}). We will show then how to use these certificates to find the possible region of two (or more) entanglement measures which are given as polynomial functions of the Schmidt coefficients. For instance, this method can be used to determine analytically all possible pairs $(E_s, E_a)$, which are accessible by a state. \par
It has been shown in \cite{Parr03} that one can efficiently find these certificates by using the Positivstellensatz and fixing the overall degree of Eq.\ \eqref{pos}, as then the problem can be written as a semidefinite program (SDP).  Hence, by solving these SDPs and obtaining certificates we can completely solve the problem of when certain values of  the source entanglement and its generalizations correspond to a physical state or not. Note that this would not be possible for measures $S_f(\rho)$ as in Eq.\ \eqref{entropy}, as they are no polynomials. For each point in the figures lying outside the boundaries of the blue sets in all previous figures we find a certificate that tells us, that there exists no state having these values for the entanglement measures. \par To explain the idea of obtaining these certificates by solving a SDP, we first recall the definition of sum of squares (SOS) polynomials and the fact that the existence of a SOS decomposition for polynomials can be decided by solving a SDP feasibility problem (see \cite{Parr03} and references therein). The computational tractability of the SOS polynomials together with fixing the overall degree of Eq.\ \eqref{pos} leads then also to the relaxations in the Positivstellensatz as explained below. \par 
A SOS polynomial $F(x), \ x \in \mathbb{R}^m$ is a real-valued polynomial of even degree, that can be written in the form $F(x) = \sum_i f_i(x)^2$, with $f_i(x) \in \mathbb{R}[x]$. Clearly not every nonnegative polynomial is SOS \cite{Hilbert}, e.g. the Motzkin form $M(x,y,z) = x^4 y^2 + x^2 y^4 + z^6 - 3 x^2 y^2 z^2$ is nonnegative but cannot be written as a SOS. 
A polynomial $F(x)$ of degree $2d$ is SOS iff it can be written as a quadratic homogeneous polynomial in $z$, where the vector $z$ contains all monomials of $x$ with degree less or equal to $d$, i.e.
\begin{equation}
F(x) \ \text{is SOS} \Leftrightarrow  \exists Q \geq 0: \ F(x) = z^T Q z,
\label{SOS}
\end{equation}
where $z = [1, x_1, x_2,...,x_m, x_1 x_2,...,x_m^d]$ and the positive semidefinite $m \times m$ matrix $Q$ is constant. Using this decomposition of SOS polynomials it is easy to restate the problem of deciding whether a polynomial is SOS or not as a SDP feasibility problem, i.e.
\begin{align}\label{SDPSOS}
\text{find} \ \ &Q \nonumber\\
\text{subject to}\ \  &\tr(A_{\alpha} Q) = c_{\alpha} \\
&Q \geq 0.\nonumber
\end{align} 
The linear equality constraints in this feasibility problem are derived from $F(x) = z^T Q z$ by comparing the coefficients in these polynomials. Note that any SDP feasibility problem possesses a dual problem which gives a witness, proving that the primary problem has no solution. In the case investigated here, the potential witness can be found by writing $F(x) = z^T Q z = \tr(z z^T Q)$ and replacing $z z^T$ by a matrix $W$, that fulfills the same linear relations among its entries as $z z^T$ (relaxing the condition of having rank 1). As long as $Q$ represents the original polynomial $F(x)$ the $\tr(W Q)$ does not depend on the specific choice of $Q$. Then the dual problem is equal to
\begin{align}\label{SDPSOSdual}
\text{find} \ \ &W \nonumber\\
\text{subject to}\ \  &\tr(W Q) < 0 \\
&W \geq 0\nonumber \\
& w_{i j} = w_{k l} \ \text{for} \{(i,j),(k,l)\} \in I\nonumber,
\end{align}
where $I$ is chosen such that the $W$-matrix fulfills the same linear conditions among the entries as $z z^T$ \footnote{Note that even though there are more conditions on the entries of $z^T z$, only the conditions that are linear in the entries of $W$ are imposed on $W$.}. For instance for $z = [x_1^2, x_2^2, x_1 x_2]$ the entries of $W$ have to fulfill $w_{12} = w_{33}$. \par
Using these results on SOS polynomials we now review the method of finding bounded-degree certificates from the Positivstellensatz. 
In order to do so, the overall degree $d_0$ of Eq.\ \eqref{pos}, i.e. $\text{deg}(f+ g^2 + h) = d_0$ is fixed \cite{Parr03}. Then the polynomial $g$, which is generated by the multiplicative monoid of the set $\{g_k\}_{k=1}^t$, is either equal to 1 if $t=0$ or it is given by $g= \prod_{i=1}^t g_i^m$, with $m$ chosen such that the degree of $g^2$ is less than or equal to $d_0$. The polynomial $f$ in the cone of the set of inequalities $\{f_i\}_{i=1}^s$ is parametrized by $f = s_0 + s_1 f_1 +...+s_s f_s + s_{12} f_1 f_2 +...+s_{12...s} f_1...f_s$, with $s_{\{i_1,...,i_s\}}$ SOS polynomials of degree less than or equal to $d_0$. Furthermore, the polynomial $h$ in the ideal of the set $\{h_j\}_{j=1}^m$ is equal to $h = t_1 h_1 + ... + t_m h_m$ with some polynomials $t_i$ of degree again less than or equal to $d_0$. The corresponding SDP feasibility problem is then given by
\begin{align} \label{SDPPOS}
\text{find} \ \ &Q_{\{i_1,...,i_s\}},  \nonumber\\ 
\text{subject to}\ \  &f + g^2 + h=0 \\
&Q_{\{i_1,...,i_s\}} \geq 0,\nonumber
\end{align} 
with $s_{\{i_1,...,i_s\}} = z_{\{i_1,...,i_s\}}^T Q_{\{i_1,...,i_s\}} z_{\{i_1,...,i_s\}}$ and the monomial vector $z_{\{i_1,...,i_s\}} =  [1, x_1, x_2,...,x_m, x_1 x_2,...,x_m^k]$, with $k \leq d_0/2$. Here, $k$ is chosen for each $z_{\{i_1,...,i_s\}}$, such that the overall degree of Eq.\ \eqref{pos} is equal to $d_0$. Note that the equation $f + g^2 + h = 0$ leads to the equality constraints for the SDP feasibility problem in \eqref{SDPPOS}. 
Hence, if the above SDP problem is feasible for a fixed degree $d_0$ of the left-hand side of Eq.\ \eqref{pos} we automatically get a certificate telling us that the set of solutions of polynomial inequalities and equations in Theorem \ref{positivstellen} is empty. Thus, we can use these certificates to verify all plots from the previous section. We simply have to show that there exist infeasibility certificates for all points in the plot, that do not correspond to a physical state. The certificates can be found numerically by using the software package SOSTOOLS \cite{SOS}. \par
To illustrate the method we consider now one particular example, namely the case of $3 \times 3$ states presented in Fig.\ \ref{EsEs4} with the two measures $E_s$ and $E_s^{4 \rightarrow 3}$. First we choose some numerical values $E_s^0$ and $(E_s^{4 \rightarrow 3})^0$ for which we want to check the existence of a certificate, telling us that no physical state corresponds to these values. That is we consider the following system of polynomial equations and inequalities 
\begin{equation}
\left. \begin{cases}  & 1- q_1^2 - q_2^2 \geq 0 \\ \left(
\begin{smallmatrix}
q_1\\
q_2\\
\end{smallmatrix}
\right) \in \mathbb{R}^2 \mid & E_s(\vec{q}) - E_s^0 = 0 \\  &  E_s^{4 \rightarrow 3}(\vec{q}) -(E_s^{4 \rightarrow 3})^0=0\end{cases} \right\},
\label{PositivstellenEsEs4}
\end{equation}
with $q_i^2 = p_i$ and the $p_i$'s are the components of the Schmidt vector that is given by the extreme points of the convex set of sorted vectors, see Eq.\ \eqref{piSchmidt}. Note that we write the measures here in terms of the $p_i$'s, such that we have only a single inequality in the set \eqref{PositivstellenEsEs4}, as we do not have to take into account the ordering of the $\lambda_i$'s. Furthermore, we substitute the parameters $p_i$ with $q_i^2$, such that we do not need to impose the condition $p_i \geq 0$. Note that the inequality in the set \eqref{PositivstellenEsEs4} is due to the norm of the Schmidt vector, i.e. $p_3 = q_3^2 = 1 - q_1^2 - q_2^2$. From the Positivstellensatz we know that the above set is empty, iff the equation 
\begin{align}\label{certificateEsEs4}
s_0 + s_1 (1- q_1^2 - q_2^2) +& t_1 (E_s(\vec{q}) - E_s^0 ) +\\ \nonumber& t_2 (E_s^{4 \rightarrow 3}(\vec{q}) -(E_s^{4 \rightarrow 3})^0) + 1 = 0
\end{align}
is fulfilled for some SOS-polynomials $s_0, s_1$ and some arbitrary polynomials $t_1, t_2$ in the parameters $q_1, q_2$ and with arbitrary degree. The existence of polynomials fulfilling Eq.\ \eqref{certificateEsEs4} can then be verified using the software package SOSTOOLS by fixing the degree of the whole equation. In this case we find certificates for all values $E_s^0$ and $(E_s^{4 \rightarrow 3})^0)$ that do not belong to a bipartite $3 \times 3$ state for the lowest possible degree of Eq.\ \eqref{certificateEsEs4}, which is given by $deg(E_s^{4 \rightarrow 3}(\vec{q})) = 6$. More precisely, by solving a SDP problem for $d_0 = 6$ we find SOS-polynomials with $deg(s_0) = 6$, $deg(s_1) = 4$ and general polynomials of degree $deg(t_1) = 2$ and $deg(t_2)=0$, see Eq.\ \eqref{certificateEsEs4}. Note that we do not only find low degree certificates for this simple case, but also for the more involved cases of bipartite $4 \times 4$ states, which is done in exactly the same way as explained here.
Furthermore, for certain values of $E_s^0$ and $(E_s^{4 \rightarrow 3})^0$ we can simply read off Eq.\ \eqref{certificateEsEs4}, that there do not exist polynomials $s_0$, $s_1$, $t_1$ and $t_2$ fulfilling the equation for any degree and thus, these values correspond to physical states.  That is, for states lying on the boundaries in Fig.\ \ref{EsEs4}, i.e. the states in Eq.\ \eqref{lambdaopt}, $E_s^0$ and $(E_s^{4 \rightarrow 3})^0)$ have no constant terms. Thus, the only constant terms in Eq.\ \eqref{certificateEsEs4} are given by the constant terms in $s_0 = z_0^T Q_0 z_0$ and $s_1 = z_1^T Q_1 z_1$ and 1. The constant terms in the each of the two SOS-polynomials are equal to the first matrix entry of $Q_0$ and $Q_1$, i.e. $(Q_0)_{11}$ and $(Q_1)_{11}$ and hence, the condition on the constant terms in Eq.\ \eqref{certificateEsEs4} is given by $(Q_0)_{11} + (Q_1)_{11}+ 1 = 0$. As both of these matrices have to be positive semidefinite we have $(Q_0)_{11} \geq 0, \ (Q_1)_{11} \geq 0$. Thus, the condition on the constant terms cannot be fulfilled and therefore, for the states on the boundaries given by Eq. \eqref{lambdaopt} there exists certainly no certificate as in Eq. \eqref{certificateEsEs4}. Hence, there exists a solution to $E_s = E_s^0$ and $E_s^{4 \rightarrow 3} = (E_s^{4 \rightarrow 3})^0$. 

\subsection{Relation between source/accessible entanglement and other entanglement measures}
\label{Sec:RelEsEaotherE}
In this section we investigate possible values of the source and accessible entanglement, respectively, given the value of some other entanglement measure. The measures we consider are on the one hand the entanglement of formation \cite{Wootters1} (see Eq.\ \eqref{entropy}), as it is an important bipartite entanglement measure with a clear operational meaning in the asymptotic limit. Thus, it is from an operational viewpoint very different to the source/accessible entanglement, which are defined in the single-copy scenario. Therefore it is very interesting to compare these measures.
On the other hand we consider the negativity \cite{negativity}, which is a mathematically tractable entanglement measure that is widely used. The negativity reads
\begin{equation}
E_N(\psi) = \frac{\Vert \rho^{T_A}\Vert_1-1}{d-1} = \frac{2}{d-1} \sum_{i<j} \sqrt{\lambda_i \lambda_j},
\end{equation} 
where  $\rho^{T_A}$ denotes the partial transpose with respect to system A and $\Vert \cdot \Vert_1$ the trace norm.
Note that the behavior for both the negativity and the entanglement of formation in terms of $E_s$ or $E_a$ is very similar. However, for $E_f$ the Positivstellensatz cannot be used to analytically verify the plots below, as this measure is not a polynomial function of the Schmidt coefficients. Note, however, that the negativity can easily be written as a polynomial, if we consider instead of the parameters $\sqrt{\lambda_i}$ $\lambda_i$. \par
Let us start again with bipartite $3 \times 3$ dimensional states and consider $4 \times 4$ states afterwards.
\subsubsection{Bipartite $3 \times 3$ states}
First, we investigate the values the entanglement of formation can take in terms of the source or the accessible entanglement.
\begin{figure}[h!] 
\begin{minipage}[h!]{3cm}
\centering  \hspace*{-0.9cm}
\includegraphics[width=1.3\textwidth]{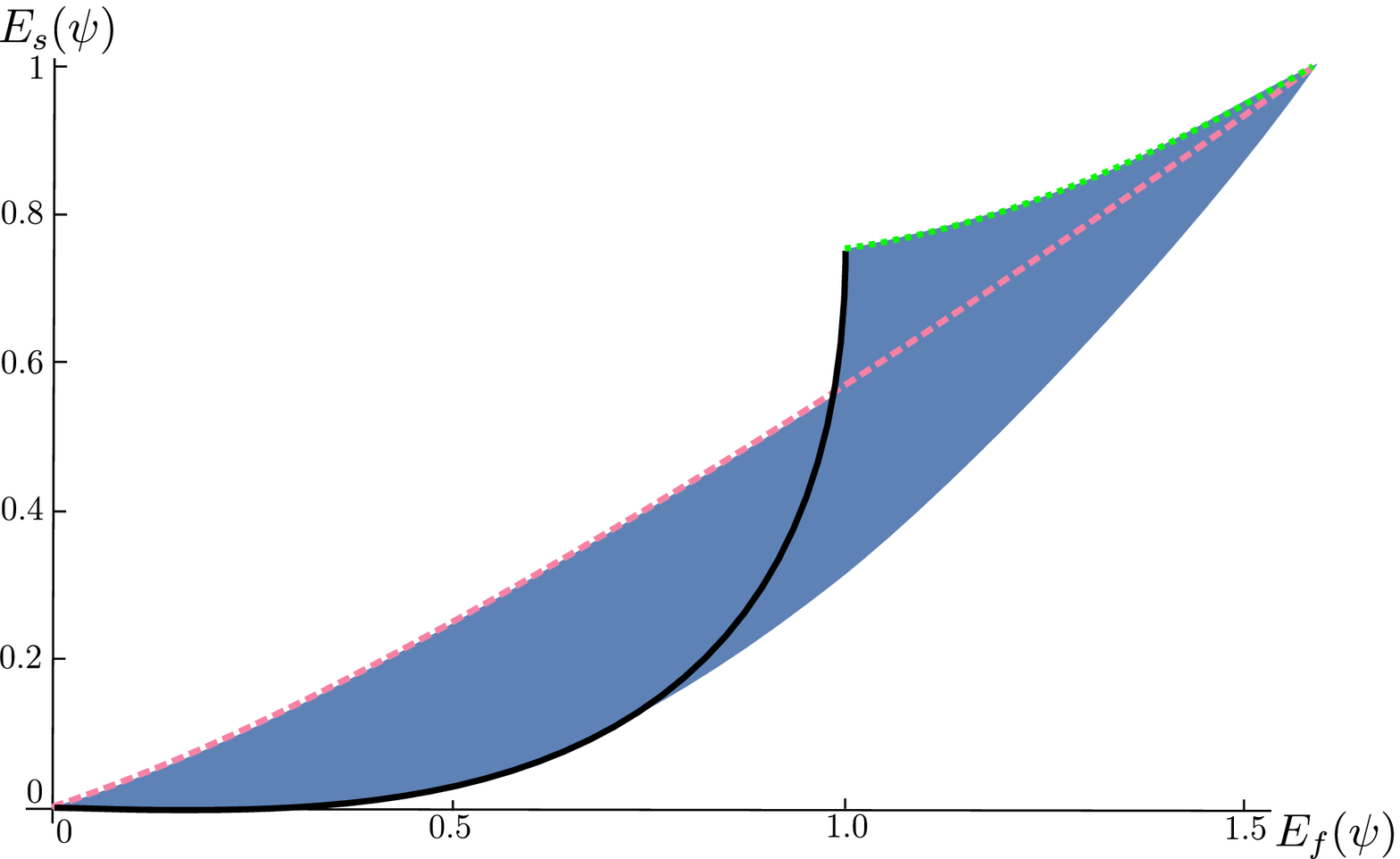}
\caption{Source entanglement versus entanglement of formation.}
\label{3x3EsEf}
\end{minipage}
\hfil
\begin{minipage}[h!]{3cm}
\centering
\includegraphics[width=1.3\textwidth]{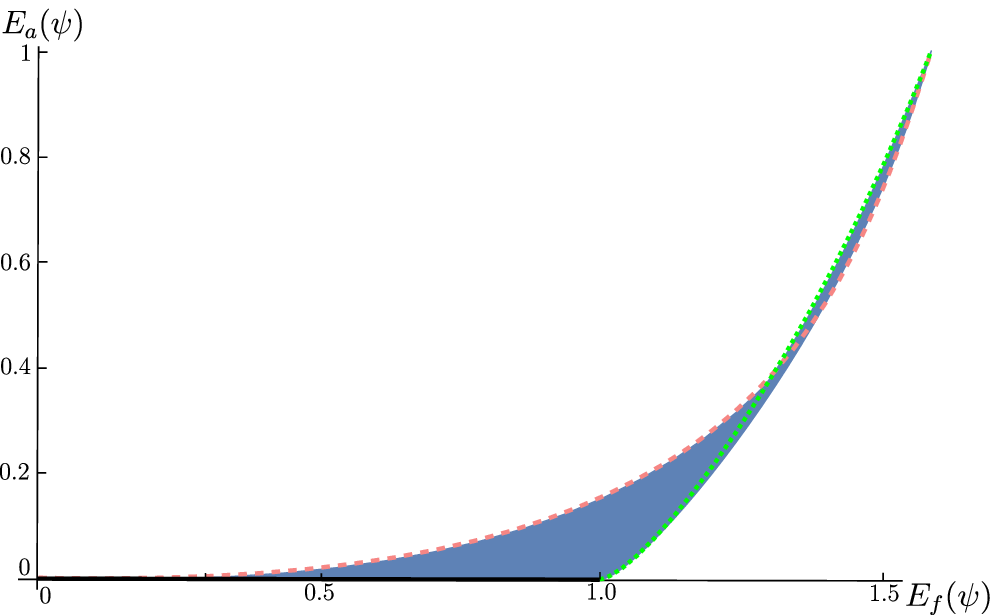}
\caption{Accessible entanglement versus entanglement of formation. }
\label{3x3EaEf}
\end{minipage}
\end{figure}
For all these plots (note that these measures also do not uniquely characterize the entanglement of $ 3 \times 3$ bipartite states) we find similar results for the boundaries of the set of physical states as for the plots of $E_s$ versus $E_a$. In particular, the states in Eq.\ \eqref{lambdaopt} are not enough to completely define the boundaries. In Fig.\ \ref{3x3EsEf} the dotted green, the dashed pink, and the black line are given by states of this form (see also Fig.\ \ref{3x3EsEa}).  
Interestingly, there is a "jump" in the possible values for $E_s$ around $E_f \approx 1$. The reason for this is that the states on the black line are actually $ 2 \times 2$ dimensional states and close to $E_f \approx 1$ the states are close to the maximally entangled state in $ 2 \times 2$ dimensions. These states cannot be reached by many $ 3 \times 3$ bipartite states via LOCC, thus the source entanglement is relatively high for these states, whereas the entanglement of formation cannot be larger than 1. Moreover, there are $ 3 \times 3$ states whose entanglement of formation is also around 1 but which can be reached by many states, thus the source entanglement is relatively small. This explains the large range of possible values of $E_s$ for $E_f \approx 1$.  The same feature is depicted in Fig.\ \ref{3x3EsEa}, where the possible range of values for $E_s$ is largest for $E_a = 0$, which includes all $ 2 \times 2$ dimensional states, that cannot access any $ 3 \times 3$ dimensional state by LOCC. This fact is also very obvious in Fig.\ \ref{3x3EaEf}, where $E_a$ versus $E_f$ is depicted. Note though, that the range of possible values of the pairs $(E_a, E_f)$ is much more constrained than the range of values for $(E_s, E_f)$.

\begin{figure}[h!] 
\begin{minipage}[H]{3cm}
\centering \hspace*{-2.2cm}
\includegraphics[width=1.8\textwidth]{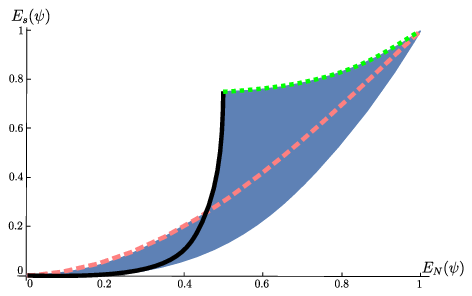}
\caption{Source entanglement versus negativity for $3 \times 3$ states.}
\label{3x3EsEn}
\end{minipage}
\hfil
\begin{minipage}[H]{3cm}
\centering
\includegraphics[width=1.3\textwidth]{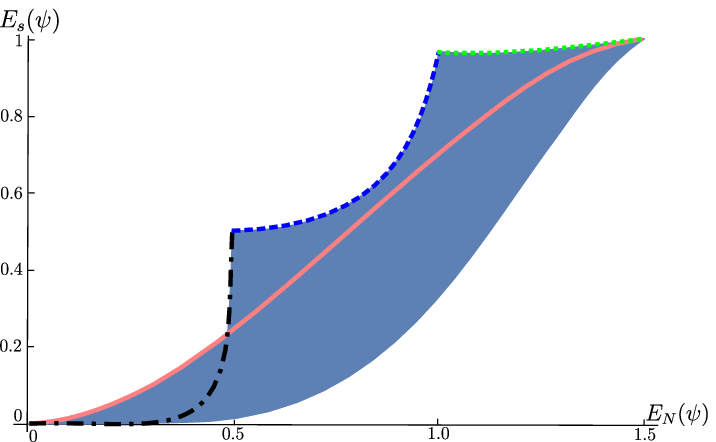}
\caption{Source entanglement versus negativity for $4 \times 4$ states.}
\label{4x4EsEn}
\end{minipage}
\end{figure}

Furthermore, the dashed pink, the dotted green and the black boundaries are given by the states from Eq.\ \eqref{lambdaopt} and there is again a small part of the boundary missing, that could be obtained by numerically optimizing $E_f$ given $E_a$.  \par
As mentioned before we also want to investigate the negativity in terms of the source entanglement. The behavior of the negativity is very similar to the entanglement of formation, as can be seen in Fig.\ \ref{3x3EsEn}. 
The dotted green, dashed pink and black line are again given by the states in Eq.\ \eqref{lambdaopt} and the missing boundary can only be obtained by numerical optimization. Around $E_N \approx 0.5$ the range of $E_s$ is also quite large, as all $ 2 \times 2$ states, which are close to the maximally entangled state, have approximately this value for the negativity. \par
We also consider as an example the values of $E_f$ and $E_s$ all states in the source and accessible set of a certain state $\ket{\phi}$ can have. For this we find again similar results as in Fig.\ \ref{3x3EsEaMaMs}. 
\begin{figure}[h!] 
\begin{minipage}[H]{3cm}
\centering \hspace*{-0.9cm}
\includegraphics[width=1.3\textwidth]{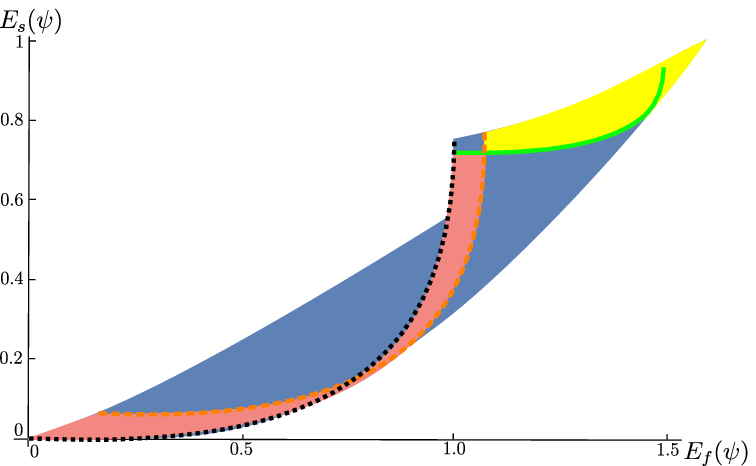}
\caption{Source entanglement versus entanglement of formation highlighting the states in the source and accessible set of a state $\ket{\phi}$ with the Schmidt vector $(0.51, 0.48, 0.01)$. }
\label{3x3EsEfMaMs}
\end{minipage}
\hfil
\begin{minipage}[H]{3cm}
\centering
\includegraphics[width=1.3\textwidth]{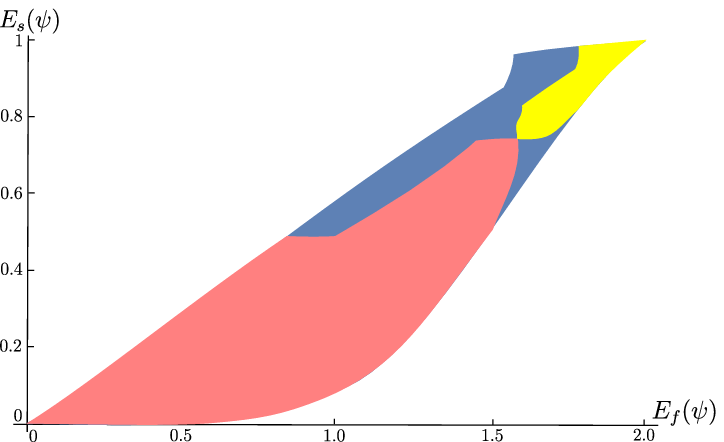}
\caption{Source entanglement versus entanglement of formation highlighting the states in the source and accessible set of a state $\ket{\phi}$ with the Schmidt vector $(0.51, 0.4, 0.1, 0.05)$.  }
\label{4x4EsEfMaMs}
\end{minipage}
\end{figure}

The dashed orange and green boundaries of the pink (lower left) accessible set and the yellow (upper right) source set in Fig.\ \ref{3x3EsEfMaMs} are as in Fig.\ \ref{3x3EsEaMaMs} given by states, for which one of the entanglement monotones $E_i$ is fixed, i.e. $E_i(\psi) = E_i(\phi)$. Moreover, the dotted black boundary is parametrized by $\lambda_3=0$. This shows that also in this case it is not enough to have just the dashed orange and green boundary, as it would be for the plot of $E_s$ and $E_s^{4 \rightarrow 3}$. \par
Note that the behavior of $E_f$ in terms of $E_N$ is very similar to $E_s$ and its generalizations. That is all boundaries of the set of possible values of pairs $(E_f, E_N)$ are given by the states in Eq.\ \eqref{lambdaopt} in contrast to the figures shown above. 

\subsubsection{Bipartite $4 \times 4$ states}
Let us now also investigate these measures for $4 \times 4$ states. We get again similar results as for the $ 3 \times 3$ case. 
In Figs. \ref{4x4EsEf}, \ref{4x4EaEf} and \ref{4x4EsEn} the lines are parametrized by the states in Eq.\ \eqref{lambdaopt}. More precisely, the pink line corresponds to the state $\{\lambda_a\}_{max}=\{ \lambda_1, \frac{1-\lambda_1}{3}, \frac{1-\lambda_1}{3},\frac{1-\lambda_1}{3} \}$ and the dotdasehd black, the dashed blue and the dotted green lines are given by the following states $\{\lambda_b\}_{min}=\{ \lambda_1,1- \lambda_1,0,0\}$, $\{\lambda_c\}_{min}=\{\lambda_1,\lambda_1,1-2 \lambda_1,0\}$ and $\{\lambda_d\}_{min}=\{\lambda_1,\lambda_1,\lambda_1,1-3 \lambda_1\}$, respectively. Note that the missing boundaries can be found numerically. Furthermore, it can also be seen in these figures, that the range of values for either $E_s$ or $E_a$ is largest for the values that are close to either the maximally entangled $ 2 \times 2$ or $ 3 \times 3$ state. In Fig.\ \ref{4x4EsEn} the maximally entangled $ 2 \times 2$ and $ 3 \times 3$ state lie both on the boundary, whereas in Fig. \ref{4x4EsEf} only the $ 3 \times 3$ state lies on the boundary (and thus there is only one "jump" in this figure).  In Fig.\ \ref{4x4EsEfMaMs} we highlight again the values for $E_s$ in terms of $E_f$ for all states that can either be accessed (pink lower left set) or reach (yellow upper right set) a certain state $\ket{\phi}$. Note again that in Fig.\ \ref{4x4EsEfMaMs} the boundaries of the two sets can not be as easily obtained as in Fig.\ \ref{4x4EsEs5Es6MaMs}. 

\begin{figure}[h!] 
\begin{minipage}[H]{3cm}
\centering \hspace*{-0.9cm}
\includegraphics[width=1.3\textwidth]{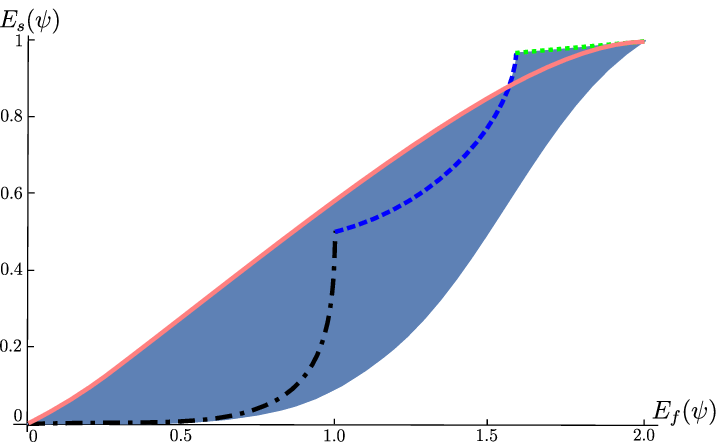}
\caption{Source entanglement versus entanglement of formation. }
\label{4x4EsEf}
\end{minipage}
\hfil
\begin{minipage}[H]{3cm}
\centering
\includegraphics[width=1.3\textwidth]{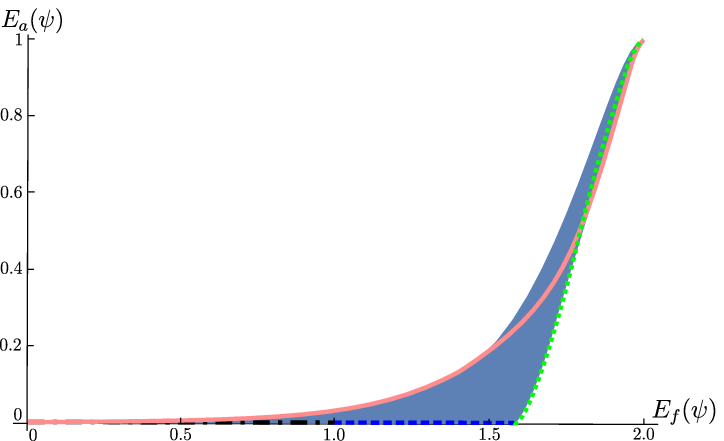}
\caption{Accessible entanglement versus entanglement of formation. }
\label{4x4EaEf}
\end{minipage}
\end{figure}

\subsection{Relation between source and accessible entanglement in connection with the success probability}
\label{Sec:Psucc}
In this section we consider probabilistic transformations \cite{vidal2} of states via LOCC and relate them to the source and accessible entanglement. The optimal success probability for these transformations is given by \cite{vidal2}
\begin{equation}
P(\Psi \rightarrow \Phi) = \min_{i \in [1, d]} \frac{E_i(\Psi)}{E_i(\Phi)}.
\label{Psucc}
\end{equation}
It has been shown in \cite{vidal2} that $\ket{\Psi}$ can be converted into $\ket{\Phi}$ with success probability p iff $p \lambda_{\Phi} \stackrel{W}{\prec} \lambda_{\Psi}$, where the weak majorization has to be considered as the sum over all vector entries is no longer equal for these vectors. 
 Here, we mention a simple observation in connection with the success probability $P(\Psi \rightarrow \Phi)$ corresponding to the optimal transformation $\ket{\Psi}\rightarrow \{p \ket{\Phi},p_j \ket{\Phi_j}\}$ and entanglement of the states $\ket{\Phi_j}$ in the ensemble.
\begin{observation}
Let $\ket{\Psi}\rightarrow \{p \ket{\Phi},p_j \ket{\Phi_j}\}$ be the optimal LOCC protocol to reach $\ket{\Phi}$ and let \bea P(\Psi\rightarrow \Phi)={\mbox min}_k \frac{E_{k} (\Psi)}{E_{k} (\Phi)}=\frac{E_{k_0} (\Psi)}{E_{k_0} (\Phi)}.\eea
Then, $E_{k_0} (\Phi_j)=0$ for any $\Phi_j$ that occurs in the ensemble. In particular the smallest $d-k_0$ Schmidt coefficients vanish for all $\Phi_j$ that occur in the ensemble.
\end{observation}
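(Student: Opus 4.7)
The strategy rests on one well-known ingredient: each functional $E_k$ appearing in Eq.~\eqref{Ineq_Maj} is an entanglement monotone for bipartite pure states, hence non-increasing on average under any LOCC protocol (this is exactly the fact that underlies the formula for $P(\Psi\!\to\!\Phi)$ itself). Applied to the ensemble $\{p,\ket{\Phi};\,p_j,\ket{\Phi_j}\}$ actually produced by the optimal protocol, this gives, for every $k\in\{1,\ldots,d\}$,
\begin{equation}
E_k(\Psi)\;\geq\;p\,E_k(\Phi)\;+\;\sum_j p_j\,E_k(\Phi_j).
\end{equation}
I would take this as the only non-trivial input and reduce the observation to a short saturation argument.

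The main step is to specialise this inequality to the critical index $k=k_0$ at which the minimum in Eq.~\eqref{Psucc} is attained. By optimality of the protocol, $p=E_{k_0}(\Psi)/E_{k_0}(\Phi)$, so $p\,E_{k_0}(\Phi)=E_{k_0}(\Psi)$ and the leading term on the right already saturates the left-hand side. What remains after cancellation is
\begin{equation}
\sum_j p_j\,E_{k_0}(\Phi_j)\;\leq\;0.
\end{equation}

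To conclude, I would invoke positivity: $p_j\geq 0$ by definition and $E_{k_0}(\Phi_j)=\sum_{i=k_0}^{d}\lambda_i^{\downarrow}(\Phi_j)$ is a sum of non-negative Schmidt coefficients. Every summand above is therefore non-negative, and must actually vanish. Hence any $\ket{\Phi_j}$ that genuinely appears in the ensemble ($p_j>0$) satisfies $E_{k_0}(\Phi_j)=0$, which in turn forces $\lambda_{k_0}^{\downarrow}(\Phi_j)=\lambda_{k_0+1}^{\downarrow}(\Phi_j)=\cdots=\lambda_d^{\downarrow}(\Phi_j)=0$, i.e.\ the $d-k_0+1$ smallest Schmidt coefficients of $\ket{\Phi_j}$ vanish, as claimed (modulo the paper's index convention). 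I do not anticipate any real obstacle; the whole observation reduces to a one-line saturation argument, and the only ingredient beyond the definitions -- the ensemble-level LOCC monotonicity of $E_k$ -- is the standard Vidal result already implicit in Sec.~\ref{Sec:NotPre}.
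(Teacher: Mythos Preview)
Your argument is correct and is essentially the same as the paper's: both use the ensemble-level monotonicity inequality $E_k(\Psi)\geq p\,E_k(\Phi)+\sum_j p_j E_k(\Phi_j)$ (the Jonathan--Plenio criterion, Ref.~\cite{Plenio99}), specialise to $k=k_0$, and exploit that optimality forces $p\,E_{k_0}(\Phi)=E_{k_0}(\Psi)$ so that the non-negative remainder $\sum_j p_j E_{k_0}(\Phi_j)$ must vanish. The paper merely rearranges the same inequality as a bound on $p$ before invoking saturation, which is the identical manipulation in slightly different dress; your remark about the index count ($d-k_0+1$ versus $d-k_0$) is also well taken.
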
 

For instance in the extreme case, where $k_0=2$ all the states $\ket{\phi_j}$ must be product states. That is, all the entanglement is transformed into the state $\ket{\Phi}$.
\begin{proof}
As shown in \cite{Plenio99} the transformation $\ket{\Psi}\rightarrow \{p_1,\ket{\Phi},p_j \ket{\Phi_j}\}_{j>1}$ is possible iff $ E_k(\Psi)\geq \sum_j p_j E_k(\Phi_j)$. From the last inequality we have that
\bea p_1 \leq \frac{E_{k} (\Psi)-c_k}{E_{k} (\Phi)} \ \ \forall k,\eea where $c_k=\sum_{j>1} p_j E_k(\Phi_j)$ and therefore
\bea p_1 = {\mbox min}_k \frac{E_{k} (\Psi)}{E_{k} (\Phi)} =\frac{E_{k_0} (\Psi)}{E_{k_0} (\Phi)}\leq \frac{E_{k_0} (\Psi)-c_{k_0}}{E_{k_0} (\Phi)} .\eea As $c_{k_0}\geq 0$ we obtain that $c_{k_0}= \sum_{j>1} p_j E_{k_0}(\Phi_j)=0$. Hence, the smallest $d-k_0$ Schmidt coefficients of all states $\ket{\Phi_j}$ must vanish.
\end{proof}
Let us now investigate this success probability for $ 3 \times 3$ states. In Fig.\ \ref{3x3EsEs4Psuccfrom} the maximum success probability (Eq.\ \eqref{Psucc}) of transforming a fixed state $\ket{\phi}$ into any other $ 3 \times 3$ state $\ket{\psi}$, i.e. $P(\phi \rightarrow \psi)$ is depicted. For the contour plot we use here and in the following 0.1 steps and go from black ($P=1$) to yellow ($0 \leq P \leq 0.1$) in colors. Thus, the other colors in Fig.\ \ref{3x3EsEs4Psuccfrom} correspond to $1 < P \leq 0.9$ for gray, $0.9 < P \leq 0.8$ for blue black, $0.8 < P \leq 0.7$ for blue and $0.7 < P \leq 0.6$ for green. Note that here $P \geq 0.6$ as the state $\ket{\phi}$ reaches $\ket{\phi^+}$ with $P(\phi \rightarrow \phi^+) = 0.6$. 
\begin{figure}[h!] 
\centering
\includegraphics[width=0.4\textwidth]{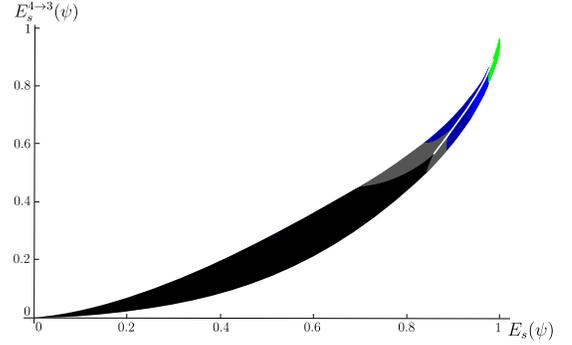}
\caption{$E_s$ versus $E_s^{4\rightarrow 3}$ including the success probability of going from a $ 3 \times 3$ state $\ket{\phi}$ with $\lambda_{\phi} = (0.52,0.28, 0.2)$ to any other state $\ket{\psi}$.}
\label{3x3EsEs4Psuccfrom}
\end{figure}
In Fig. \ref{3x3EsEs4Psuccto} we consider the opposite direction, i.e. the success probability with which any $ 3 \times 3$ state $\ket{\psi}$ can be transformed to a certain state $\ket{\phi}$, i.e. $P(\psi \rightarrow \phi)$. Note that in this case we get all possible values for the success probability ranging from $0$ to $1$. The white line in Fig.\ \ref{3x3EsEs4Psuccfrom} (\ref{3x3EsEs4Psuccto}) is parametrized by $\lambda_2 = \frac{\lambda_2^{\phi}}{\lambda_3^{\phi}} \lambda_3$ for $0 \leq \lambda_3 \leq \frac{\lambda_3^{\phi}}{2 \lambda_2^{\phi} + \lambda_3^{\phi}}$ and $\lambda_2 = \lambda_1 = \frac{1-\lambda_3}{2}$ for $\frac{\lambda_3^{\phi}}{2 \lambda_2^{\phi} + \lambda_3^{\phi}} < \lambda_3 \leq 1/3$ and indicates all states with maximum (minimum) $E_s$ and (or) $E_s^{4 \rightarrow 3}$ for constant success probability $P$ (apart from the black regions, where the success probability is equal to 1). 
\begin{figure}[h!] 
\centering
\includegraphics[width=0.4\textwidth]{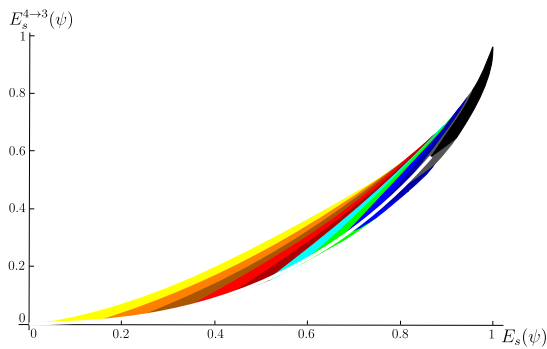}
\caption{$E_s$ versus $E_s^{4\rightarrow 3}$ including the success probability of obtaining a $ 3 \times 3$ state $\ket{\phi}$ with $\lambda_{\phi} = (0.52,0.28, 0.2)$ from any other state $\ket{\psi}$.}
\label{3x3EsEs4Psuccto}
\end{figure}
Note that one can observe the lines of constant success probability, i.e. $P = const.$,  at the boundary lines of each color transition. In Fig.\ \ref{3x3EsEs4Psuccfrom} these lines resemble the boundary lines of the set containing all states that can be reached deterministically by $\ket{\phi}$ (see Fig.\ \ref{3x3EsEs4MaMs}), which are parametrized by $E_i (\psi) = E_i(\phi)$. Thus, it is clear that the lines of constant success probability are similar to these lines, as the success probability is given by the minimal ratio of the entanglement monotones $E_i$. Hence, for a constant success probability $p_c$ we get $P(\phi \rightarrow \psi) = \min_i \frac{E_i(\phi)}{E_i(\psi)} = \frac{E_j(\phi)}{E_j(\psi)} = p_c$ and therefore $E_j(\phi) = p_c E_j(\psi)$. In Fig.\ \ref{3x3EsEs4Psuccto} the lines of constant success probability resemble the behavior of the set containing the states that can reach $\ket{\phi}$ deterministically in Fig.\ \ref{3x3EsEs4MaMs}, due to the same argument as above. \par
Next we investigate the success probability for states in terms of two measures, that do not uniquely characterize the entanglement of $ 3 \times 3$ states. We show as an example in Fig.\ \ref{3x3EsEaPsuccto} $ 3 \times 3$ states in terms of $E_s, E_a$ together with the maximum success probability of obtaining the same state $\ket{\phi}$ by any other state $\ket{\psi}$. The white line is defined as before. Moreover, in Fig.\ \ref{3x3EsEaPsuccto} the fact that $E_s$ and $E_a$ do not uniquely characterize the entanglement of $ 3 \times 3$ states can be easily noticed, as for several points with the same value of $E_s$ and $E_a$ two different success probabilities overlap. Thus, for these points there must exist two different states $\ket{\psi_1}$ and $\ket{\psi_2}$ with $E_{s(a)} ( \psi_1 ) = E_{s(a)} ( \psi_2 ) $ and $P(\psi_1) \neq P(\psi_2)$. 
\begin{figure}[h!] 
\centering
\includegraphics[width=0.4\textwidth]{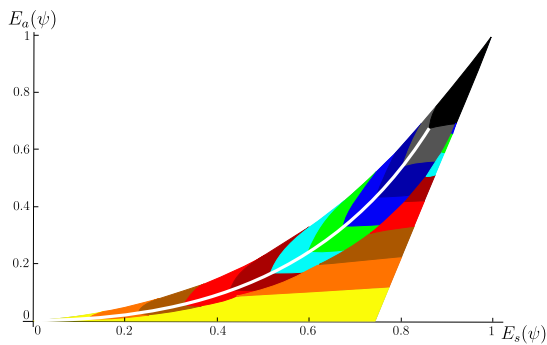}
\caption{$E_s$ versus $E_a$ including the success probability of obtaining a $ 3 \times 3$ state $\ket{\phi}$ with $\lambda_{\phi} = (0.52,0.28, 0.2)$ from any other state $\ket{\psi}$.}
\label{3x3EsEaPsuccto}
\end{figure}
Note also that we get similar results for $4 \times 4$ states. 

\section{Conclusion}
We investigated the properties and relations of two classes of operational entanglement measures, the source and accessible entanglement. We focused mainly on bipartite pure states and gave an operational characterization of bipartite pure state entanglement with the help of another operational measure, namely the geometric measure of entanglement. Furthermore, we investigated the reachable parameter regime of the source and accessible entanglement. Moreover, we determined for some fixed state $\ket{\phi}$ the parameter regime from which it can be reached and the parameter regime $\ket{\phi}$ can reach. We also showed that these regimes can be obtained analytically with the help of the Positivstellensatz and computed the boundaries of these sets. By relating the results to other entanglement measures and also to probabilistic transformations we could compare their behavior. 
\par
Let us finally mention that we investigated also $5 \times 5$ states and found that for these states comparing the source entanglement and its generalizations leads to the same results as for $ 3 \times 3$ and $4 \times 4$ states. Thus, we conclude that the simplicity of the figures showing the values of $E_s$ and the generalizations seems to be a general feature of bipartite pure states. That is the boundaries of the set of all values given by these measures can be easily obtained with the help of the states in Eq.\ \eqref{lambdaopt}. Moreover, the boundaries of the sets containing all states that can either be reached or can reach a specific state $\ket{\phi}$ are also easy to obtain and the range of possible values for the tuples $(E_s, E_s^{d+1 \rightarrow d},...,E_s^{2d -2 \rightarrow d})$ is confined. \\
It will be interesting to investigate the parameter regime of the measures considered here also for multipartite states. 

\section{Acknowledgments}
We would like to thank Gilad Gour and Richard Jozsa for helpful comments. Moreover, we thank Fedor Petrov and Tewodros Amdeberhan for helpful comments regarding the proof of Lemma \ref{LemmaEs} and in particular for pointing out Ref. \cite{Amd16} to us.
This research was funded by the Austrian Science Fund (FWF): Y535-N16.

\begin{appendix}

\section{The source and accessible entanglement of $ 3 \times 3$ and $4 \times 4$ states}
\label{sec:AppendixA}
Let us present here the explicit expressions for the source and accessible entanglement of $ 3 \times 3$ and $4 \times 4$ states, which we used in Sec. \ref{physregionEsEa}. Let us start with $ 3 \times 3$ states, for which the expressions were already obtained in \cite{Sauer15}. 
\begin{align}
&E_s(\psi) = 3\lambda_2^2-6\lambda_2 \lambda_3 - 6 (\lambda_3-1)\lambda_3,\\
&E_a(\psi) =   \begin{dcases} 12 \lambda_2 \lambda_3 & \textrm{if } \lambda_1 > \frac{1}{2} \\
12 [\lambda_2 \lambda_3 - 1/4 (1-2\lambda_1)^2] & \textrm{if } \lambda_1\leq\frac{1}{2}
  \end{dcases}
\end{align}
For the generalized source entanglement, i.e. all $4 \times 4$ states that reach a certain $ 3 \times 3$ state deterministically, we obtain
\begin{align}
E_s^{4 \rightarrow 3}( \psi) = \frac{27}{13} \left(2 \lambda_2^3+6 \lambda_2^2 \lambda_3+3 (3-4 \lambda_2) \lambda_3^2-10 \lambda_3^3\right).
\end{align}
Thus, it is moreover easy to see that $E_s$ together with $E_s^{4 \rightarrow 3}$ completely characterize the entanglement of $ 3 \times 3$ states.  \par
For $4 \times 4$ states the source and accessible entanglement are given by
\begin{align}
E_s(\psi) = & 4 \lambda_2^3+12 \lambda_2^2 \lambda_3-24 \lambda_2^2 \lambda_4-24 \lambda_2 \lambda_3^2+24\lambda_2 \lambda_3 \lambda_4 \nonumber\\
& +12 \lambda_2 \lambda_4^2-20 \lambda_3^3+12 \lambda_3^2 \lambda_4+18 \lambda_3^2 + 48 \lambda_3 \lambda_4^2 \nonumber \\
& -36 \lambda_3 \lambda_4+20 \lambda_4^3-30 \lambda_4^2+12 \lambda_4,
\end{align}
\begin{widetext}
\begin{align}
E_a(\psi) = \scriptsize{ \begin{dcases}    24 \lambda_4 (6 \lambda_2 \lambda_3+ \lambda_4 (-3 \lambda_3+
\lambda_4))  & \textrm{if } \lambda_1 \geq 1/2 \ \text{and} \
\lambda_1 > 1 - 2 \lambda_2 \\
12 \left(-\left(\lambda _2-\lambda _3\right){}^3-3 \left(\lambda _2+\lambda _3\right) \lambda _4^2+3 \lambda _4^3+3 \left(\lambda _2+\lambda _3\right){}^2 \lambda _4\right) & \textrm{if } \lambda_1 \geq \frac{1}{2} \land \lambda_1 \leq 1-2 \lambda _2 \\
 2 (-36 \lambda _1^3-18 \lambda _1 \left(\left(1-2 \lambda _2\right){}^2-2 \lambda _4^2+4 \lambda _2 \lambda _4\right)-36 \lambda _1^2 \left(\lambda _2-1\right)   & \textrm{if } \lambda_1 \leq \frac{1}{3}\land \lambda_1 \geq \frac{1}{2}-\lambda _4 \\
 +12 \left(-3 \lambda _2 \left(\lambda _4-1\right){}^2-6 \lambda _2^2 \left(\lambda _4-1\right)-4 \lambda _2^3+\lambda _4^2 \left(4 \lambda _4-3\right)\right)+5) & \\
4 (-30 \lambda _1^3+6 \left(-3 \lambda _2 \left(\lambda _4-1\right){}^2-6 \lambda _2^2 \left(\lambda _4-1\right)-4 \lambda _2^3+2 \lambda _4^3\right) & \textrm{if } \lambda_1 \leq \frac{1}{3}\land \lambda_1 \leq \frac{1}{2}-\lambda _4 \\
-18 \lambda _1 \left(\left(\lambda _4-1\right){}^2+2 \lambda _2 \left(\lambda _4-1\right)+2 \lambda _2^2\right)-9 \lambda _4-18 \lambda _1^2 \left(\lambda _2+2 \lambda _4-2\right)+4) & \\
6 (6 \lambda _1^3+12 \left(-2 \lambda _2^2+\lambda _4^2-2 \lambda _2 \left(\lambda _4-1\right)\right) \lambda _1-6 \lambda _1^2 \left(2 \lambda _2+1\right)& \textrm{if } \frac{1}{3} \leq \lambda_1 \leq \frac{1}{2}\land \lambda_1 \geq \frac{1}{2}-\lambda _4 \land\lambda_1 \leq 1-2 \lambda _2\\
+4 \left(-3 \lambda _2 \left(\lambda _4-1\right){}^2-6 \lambda _2^2 \left(\lambda _4-1\right)-4 \lambda _2^3+\lambda _4^2 \left(4 \lambda _4-3\right)\right)+1) & \\
12 \left(-\left(\lambda _1+2 \lambda _2-1\right){}^3-6 \left(\lambda _1+\lambda _2\right) \lambda _4^2+4 \lambda _4^3-3 \left(\left(1-2 \lambda _2\right){}^2+4 \lambda _1^2+4 \lambda _1 \left(\lambda _2-1\right)\right) \lambda _4\right)& \textrm{if } \frac{1}{3} \leq \lambda_1 \leq \frac{1}{2}\land \lambda_1 \leq \frac{1}{2}-\lambda _4 \land\lambda_1 \leq 1-2 \lambda _2\\
6 \left(\left(2 \lambda _1-1\right){}^3+16 \lambda _4^3+12 \lambda _4^2 \left(\lambda _1-\lambda _2-1\right)-24 \lambda _2 \left(\lambda _1+\lambda _2-1\right) \lambda _4\right)& \textrm{if } \frac{1}{3} \leq \lambda_1 \leq \frac{1}{2}\land \lambda_1 \geq \frac{1}{2}-\lambda _4 \land\lambda_1 \geq 1-2 \lambda _2\\
12 \lambda _4 \left(-12 \left(\lambda _1^2+\lambda _2^2+\lambda _1 \left(\lambda _2-1\right)\right)+4 \lambda _4^2+12 \lambda _2-6 \left(\lambda _1+\lambda _2\right) \lambda _4-3\right) & \textrm{if } \frac{1}{3} \leq \lambda_1 \leq \frac{1}{2}\land \lambda_1 \leq \frac{1}{2}-\lambda _4 \land\lambda_1 \geq 1-2 \lambda _2
 \end{dcases}}
\end{align}
\end{widetext}

Furthermore, the generalizations of the source entanglement, i.e. all $6 \times 6$ states that can reach a $4 \times 4$ state and all $5 \times 5$ states that can reach a $4 \times 4$ state are equal to
\begin{align} 
\hspace*{-0.5cm} E_s^{5 \rightarrow 4} (\psi) =& -5 (-\lambda _2^4-21 \lambda _4^4+\lambda _3^3 \left(9 \lambda _3-8\right)+4 \lambda _4^3 \left(8-15 \lambda _3\right) \nonumber\\ \nonumber &-6 \lambda _4^2 \left(\lambda _3 \left(3 \lambda _3-8\right)+2\right)+12 \lambda _3^2 \left(3 \lambda _3-2\right) \lambda _4 \\ \nonumber &-4 \lambda _2^3 \left(\lambda _3+\lambda _4\right)-6 \lambda _2^2 \left(\lambda _3^2-5 \lambda _4^2+2 \lambda _3 \lambda _4\right)\\  &+12 \lambda _2 \left(\lambda _3-\lambda _4\right) \left(\lambda _3^2+\lambda _4^2+4 \lambda _3 \lambda _4\right)), \\
\hspace*{-0.5cm} E_s^{6 \rightarrow 4} (\psi)= &6 \lambda _2^5-84 \lambda _3^5+15 \lambda _3^4 \left(5-8 \lambda _2\right)+60 \lambda _2^2 (\lambda _3^3-9 \lambda _4^3 \nonumber\\ \nonumber &+3 \lambda _4^2 \lambda _3+3 \lambda _3^2 \lambda _4)+60 \left(\lambda _2^3 (\lambda _3+\lambda _4\right){}^2+\lambda _4 (-7 \lambda _3^4 \\ \nonumber &+6 \lambda _3^2 \lambda _4^2+\lambda _3^3(-8 \lambda _2-14 \lambda _4+5)))+3 \lambda _4^3 (60 \lambda _3 \\ \nonumber &+(4 \lambda _2 +6 \lambda _4-5)+7 \lambda _4 \left(16 \lambda _4-25\right)+10 (20 \lambda _4^3\\  &+9 \lambda _3^2 \lambda _4^2 \left(5-8 \lambda _2\right)+3 \left(6 \lambda _4^4 \lambda _2+\lambda _2^4 \left(\lambda _3+\lambda _4\right)\right)).
\end{align}
Note that also for the $4 \times 4$ case we could show with the help of the Positivstellensatz that $E_s$, $E_s^{5 \rightarrow 4} $ and $E_s^{6 \rightarrow 4} $ completely characterize the entanglement.

\section{Operational characterization of bipartite entanglement}
\label{App:OpCharProof}
In this Appendix we proof Lemma\ \ref{2nx2nentanglement}, given in Sec.\ \ref{Sec:OpChar}. More precisely, we show that any bipartite pure state is operationally characterized by the bipartite entanglement of all possible bipartite splittings of some qubits in B versus the rest. Here, the bipartite entanglement is given in terms of the largest Schmidt coefficient, which is equivalent to the geometric measure of entanglement. 
 \proof{
 We show Lemma \ref{2nx2nentanglement} by induction over $n$. That is, we first consider the $n=2$ case, i.e. a $4\times 4$ state that we treat as a 4-qubit state, for which the statement in Lemma \ref{2nx2nentanglement} is easily shown as follows. The state can be written as $\ket{\psi} = \sum_{i j} \sqrt{ \lambda_{i j}} \ket{i j}_A \ket{i j}_B$. In this case we have only two different splittings (first qubit and second qubit in B versus the rest, respectively) with largest Schmidt coefficients $E_{01}= \lambda_{11}+\lambda_{01}$ and $E_{10} = \lambda_{11} + \lambda_{10}$, respectively. Thus, together with the largest Schmidt coefficient in the splitting $A$ versus $B$, i.e. $\lambda_{11}$, we obtain all Schmidt coefficients of the state $\ket{\psi}$ and therefore, uniquely characterize its entanglement. \par
Now, let us assume that the statement holds for $\ket{\Psi}\in \mathbb{C}^{2^k} \otimes \mathbb{C}^{2^k}$ for all $k \leq n-1$ and prove that we also obtain all Schmidt coefficients for $k = n$. For this we write the state as
\begin{align} \label{bipstatequbits}
\hspace*{-0.5cm}\ket{\psi} &= \sum_{i_1,i_2,...,i_n} \sqrt{ \lambda_{i_1 i_2 ... i_n}} \ket{i_1 i_2 ... i_n}_A \ket{i_1 i_2 ... i_n}_B\\
\hspace*{-0.5cm}&= \sum_{\vec{i}_{n-1}} \scriptstyle{\ket{\vec{i}_{n-1}}_{B_1...B_{n-1}}} \displaystyle{ \sum_{j_n=0}^1 \sqrt{\lambda_{\vec{i}_{n-1} j_n}}} \scriptstyle{ \ket{\vec{i}_{n-1}}_{A_1...A_{n-1}} \ket{j_{n}}_{A_n} \ket{j_{n}}_{B_n}}, \nonumber
\end{align}
with $\vec{i}_{n-1} = i_1 i_2 .... i_{n-1}$ and sorted Schmidt coefficients, i.e.  $\lambda_{0....0} \leq \lambda_{0...01} \leq \lambda_{0....10}\leq... \leq \lambda_{1...1}$.  Note that in the second line of Eq.\ \eqref{bipstatequbits} we write the state in the splitting of the first $n-1$ qubits in B versus the rest. We will consider all possible splittings of $n-1$ qubits in B, i.e. $B_{l_1,...l_{n-1}}$ versus the rest. As the second line of Eq.\ \eqref{bipstatequbits} is already the Schmidt decomposition of the state, the Schmidt coefficients are given by e.g. $E_{0...011110..0} = \lambda_{0..01110...0} + \lambda_{0...011110...0}$. At this point we use the induction assumption, namely that all Schmidt coefficients are known for the splittings $B_{l_1,...l_{n-1}}$ versus the rest, as $k = n-1$ in this case. For example for the splitting of the $n-	1$ first qubits in B versus the rest they are given by $\sum_{j_n=0}^1 \lambda_{\vec{i}_{n-1} j_n}$. The Schmidt coefficients in the splitting A versus B, i.e. the $\lambda$'s, can then be computed recursively, i.e.
\begin{align}
\lambda_{1...101...1} &= E_{1....101...1} - \lambda_{1...1}, \\
\lambda_{1...1001...1} & =E_{1....1001...1} - \lambda_{1...101...1} \nonumber \\ \nonumber &= E_{1....1001...1}-E_{1....101...1} + \lambda_{1...1},\\
\lambda_{1...10001...1} & =E_{1....10001...1} - \lambda_{1...1001...1} \nonumber \\ \nonumber &= E_{1....10001...1}-E_{1....1001...1} +E_{1....101...1}\\ \nonumber & \phantom{=} - \lambda_{1...1},\\
& \phantom{=} \vdots
\end{align}
Hence, the largest Schmidt coefficient $\lambda_{1...1}$ together with the geometric measure of entanglement in all possible splittings of some qubits in B versus the rest uniquely determine the state $\ket{\psi}$, which completes the proof.
\qed
 }
 
\end{appendix}

\end{document}